\declaretheorem[name=Theorem,numberwithin=section]{theorem}
\declaretheorem[name=Definition,numberwithin=section]{definition}
\declaretheorem[name=Remark,numberwithin=section]{remark}
\declaretheorem[name=Assumption]{assumption}
\declaretheorem[name=Example,numberwithin=section]{example}
\newcounter{subassumption}
\title{On evolving network models and their influence on opinion formation}
\author[1,2]{Andrew Nugent}
\author[2]{Susana N. Gomes}
\author[2]{Marie-Therese Wolfram}
\affil[1]{MathSys CDT, University of Warwick}
\affil[2]{Mathematics Institute, University of Warwick}
\begin{document}

\maketitle

\begin{abstract}
     In this paper, we propose a new model for continuous time opinion dynamics on an evolving network. As opposed to existing models, in which the network typically evolves by discretely adding or removing edges, we instead propose a model for opinion formation which is coupled to a network evolving through a system of ordinary differential equations for the edge weights. We interpret each edge weight as the strength of the relationship between a pair of individuals, with edges increasing in weight if pairs continually listen to each other’s opinions and decreasing if not. We investigate the impact of various edge dynamics at different timescales on the opinion dynamics itself. This is done partly through analytic results and partly through extensive numerical simulations of two case studies: one using bounded confidence interaction dynamics in the opinion formation process (as in the classical Hegselmann-Krause model) and one using an exponentially decaying interaction function. We find that the dynamic edge weights can have a significant impact on the opinion formation process, since they may result in consensus formation but can also reinforce polarisation. Overall, the proposed modelling approach allows us to quantify and investigate how the network and opinion dynamics influence each other.
\end{abstract}

\textbf{Keywords:} opinion dynamics, network evolution, extreme timescales. 

\textbf{Highlights:}
\begin{itemize}
    \item We propose a novel system of coupled ODEs for opinion and network evolution. 
    \item Numerical case studies explore the impact of network dynamics on opinion formation. 
    \item Some network dynamics promote consensus while others lead to rapid local clustering.
    \item Limiting models exist at extreme relative timescales of opinion and network dynamics.
\end{itemize}

\section{Introduction} \label{Section: Introduction}

The study of opinion dynamics began with a model of how a small group could reach consensus on a single issue \cite{degroot1974reaching} and has since expanded to complex models describing opinion formation across populations and societies \cite{brooks2020model,hegselmann2015opinion,ayi2021mean}. Opinion formation models have been used to forecast opinions \cite{de2016learning}, study financial markets \cite{zha2020opinion}, design opinion controls \cite{fornasier2014mean} and methods to prevent the disruption of consensus \cite{gaitonde2020adversarial}. 

A key idea proposed by Hegselmann and Krause \cite{hegselmann2002opinion}, also introduced separately by Deffuant \textit{et al.} \cite{deffuant2000mixing}, is the concept of bounded confidence: that individuals only interact when their opinions are already sufficiently close. More generally, many models employ an interaction function that determines the extent to which individuals value each others' opinions based on the distance between those opinions. Such non-linear models are capable of displaying a variety of behaviours and have been used to study the influence of stubborn or radical individuals \cite{hegselmann2015opinion}, online media \cite{brooks2020model} as well as the transition between consensus and polarisation \cite{wang2017noisy,motsch2014heterophilious}. 

An increasingly common feature in opinion dynamics models is the introduction of an underlying (social) network across which individuals interact \cite{amblard2004role,gabbay2007effects,lorenz2007continuous}. This approach is more representative of real opinion formation, as individuals could not be expected to interact with the entirety of a large population, nor to consider all opinions equally. A further development is to allow this underlying network to evolve in response to the opinion formation process happening across it. A common approach is to evolve the network discretely by repeatedly `rewiring' the network: removing one edge from the network and adding a new edge somewhere else \cite{gross2008adaptive, kozma2008consensus, iniguez2009opinion, barre2021fast, kan2023adaptive}. Typically this is done according to the idea of `homophily': edges connecting individuals with strongly differing opinions are removed, whereas new edges are added between individuals whose opinions are more similar \cite{chu2022non,kan2023adaptive}. An alternative approach is to record some additional parameter that controls when edges are switched on or off \cite{mariano2020hybrid}. Over time, this creates community structure within the network, with individuals within communities having similar opinions and different communities having differing opinions \cite{iniguez2009opinion}. Another interesting feature is the inclusion of triadic closure: adding an edge between two nodes that both have a connection to a shared third node \cite{iniguez2009opinion}. This procedure is particularly relevant given the interpretation of the network as a social network, as the two nodes that become connected are the `friend-of-a-friend'(FOAF) \cite{rapoport1953spread,heider1982psychology}. However, with few exceptions \cite{ayi2021mean,evans2018opinion}, updates to the underlying network are typically discrete and edges are either present or absent. 

The goal of this paper is to introduce a new model in which the underlying network represents the strength of the relationship between two individuals and evolves continuously alongside their opinions. The idea that continued interaction both builds social ties and is required to maintain them motivates the network dynamics we consider. We focus on how the evolution of the network affects opinion formation, especially when it brings the population closer to, or further from, consensus. Our `dynamic network' model provides a new approach in modern continuous-time opinion dynamics by treating individuals' relationships as an active part of the opinion formation process.  

Due to the vast array of opinion models that have been proposed, with variation in discrete/continuous time, discrete/continuous opinion spaces, deterministic/stochastic processes, finite/infinite population sizes and differing communication regimes, we begin by motivating the type of model we will consider and describe its behaviour on a fixed network in Section \ref{Section: Fixed network}. This provides a baseline against which the dynamic network model defined in Section \ref{Section: Dynamic network} can be compared. Several analytic results about this model are given in Section \ref{Section: General properties}. Three example weight dynamics are then introduced in Section \ref{Section: examples} and studied through extensive simulation in Sections \ref{Section: BC} and \ref{Section: exp}. In Section \ref{Section: Timescales} we discuss the impact of altering the relative timescale of opinion formation and network evolution and consider the limits as these timescales become extreme. Finally Section \ref{Section: Discussion} interprets our results and discusses potential future research. 

\section{Fixed network} \label{Section: Fixed network}

For a finite, fixed population of $N$ individuals, we consider an opinion formation model given by a system of $N$ ordinary differential equations (ODEs). Such a formulation avoids introducing an arbitrary fixed timestep or a fixed number of possible opinions. Additionally, using a system of ODEs will allow us to analyse the effect of extreme differences in timescales of the opinion formation process and network dynamics in Section \ref{Section: Timescales}. A finite population is chosen to give clear meaning to the idea of a social network. The opinion formation process is one of continuous weighted averaging of opinions, inspired by the models of DeGroot \cite{degroot1974reaching}, Deffuant and Weisbuch \cite{deffuant2000mixing} and Hegselmann and Krause (HK) \cite{hegselmann2002opinion}. Although bounded confidence models will serve as an important example throughout this paper, the model is presented for a general interaction function. 

Let $x(t) = \big( x_1(t), \dots, x_N(t) \big)$ be the individuals' opinions at time $t\geq0$. Assume $x_i(0) \in [-1,1]$ for all individuals $i$ (although in principle any bounded interval could be used). Here $x_i$ can be interpreted as the extent to which individual $i$ agrees or disagrees with a given statement. Define an interaction function $\phi:[0,2] \rightarrow [0,1]$ so that $\phi\big(|x_i - x_j|\big)$ describes the value which individual $i$ gives the opinion of individual $j$. We take $|\cdot|$ to be the standard Euclidean distance, but in principle, any metric could be used. We make the following assumptions about the interaction function $\phi$. 

\begin{assumption} \label{Assumption group: phi}
The interaction function $\phi\colon [0,2] \rightarrow [0,1]$ satisfies 
\begin{enumerate}[label=\alph*)]
    \item
       $\phi(r) \geq 0$ for all $r\in[0,2]$. \label{Assumption: phi positive}
    \item 
        $\phi$ is integrable over $[0,2]$. \label{Assumption: phi integrable}
\end{enumerate}
\end{assumption}

In general, we do not assume that $\phi$ is continuous. For example, the classical HK model introduces a confidence level $R\geq0$ and the (discontinuous) interaction function
\begin{equation} \label{Eqn: confidence indicator}
    \phi_R(|x_i - x_j|) = 
    \begin{cases}
        1 & \text{if  } |x_i - x_j| < R \\
        0 & \text{if  } |x_i - x_j| \geq R.
    \end{cases}
\end{equation}
Next we state some definitions and assumptions used to define and analyse the proposed opinion formation models on networks. Throughout this paper all networks will use the same node set: $V = \{1,\dots,N\}$. That is, one node corresponding to each individual in the population. With this node set fixed, the network can be characterised by its weighted adjacency matrix $w\in[0,1]^{N\times N}$, hence we refer to $w$ as a network throughout. We interpret the network as describing the relationships between individuals, their level of trust, and familiarity.

\begin{definition}
    A network $w$ is called undirected if $w_{ij} = w_{ji}$ for all individuals $i,j \in \lbrace 1, \ldots N\rbrace$, $i \neq j$ and is called directed otherwise.
\end{definition}
\begin{definition}
    Fix a network $w$. Nodes $i \in \lbrace 1, \ldots N\rbrace$ and $j \in \lbrace 1, \ldots N\rbrace$, $i \neq j$, are in the same connected component of the network if there exists some $m \in \mathds{N}$ such that $(w^m)_{ij} >0$. Furthermore, if $(w^N)_{ij} >0$ for all $i,j$ then the network is called connected. 
\end{definition}

\begin{definition}
    The (in-)degree of an individual will be denoted by $k_i$ and the mean degree by $\Bar{k}$. These are defined by:
    \begin{equation}
        k_i = \sum_{j=1}^N w_{ij}\, \text{ and }
        \Bar{k} = \frac{1}{N} \sum_{i=1}^N k_i\,.
    \end{equation}
\end{definition}
Individuals' opinions then evolve according to the system of ODEs,
\begin{equation} \label{Eqn: fixed network ODEs}
    \frac{dx_i}{dt} = \frac{1}{k_i} \sum_{j\neq i} w_{ij}\,\phi\big(|x_i - x_j|\big)\,(x_j - x_i), \quad\quad i = 1,\dots,N \,,
\end{equation}
so that individuals move towards the weighted mean opinion of those they interact with. The normalisation by $k_i$ ensures that individuals with a higher degree in $w$, that is more `popular' individuals, do not change their opinions faster than those individuals with a low degree. We refer to this system as the fixed network model. 

Models of the form \eqref{Eqn: fixed network ODEs}, with a general interaction function, have been studied previously in the context of opinion dynamics on fully connected networks \cite{motsch2014heterophilious}. Models including a fixed network have been considered in the case of bounded confidence interactions \cite{parasnis2018hegselmann,fortunato2005consensus}. The impact of social networks has also been studied in the Deffuant-Weisbuch model \cite{lorenz2007power,stauffer2004simulation}, in Bayesian models \cite{acemoglu2011opinion} and in a model combining HK with the DeGroot model \cite{wu2022mixed}. 

\begin{remark}
    Although we focus here on opinions on a single topic, the separation of the social network $w$ from the interaction function $\phi$ allows for a straightforward extension into multiple topics, with the social network $w$ common among all topics. Indeed the definition here, with a general metric used to calculate the interaction function $\phi\big(|x_i - x_j|\big)$, already applies if each individual's opinion $x_i$ has multiple dimensions. 
\end{remark}

We now show that the fixed network model is well-defined in the context of opinion dynamics. 

\begin{definition}
The diameter $D: [0,T] \rightarrow [0,2]$ of the population's opinions is defined as
    \begin{equation}\label{Eqn:diameter}
        D(t) := \max_{ij} |x_i(t) - x_j(t)|.
    \end{equation}
    Furthermore, we say that the population reaches consensus if $D(t)\rightarrow0$ as $t\rightarrow\infty$. 
\end{definition}
\begin{restatable}{proposition}{boundedness (fixed network)}  \label{Prop: Boundedness (fixed network)}
    Let $\phi$ satisfy Assumption \ref{Assumption group: phi} and fix initial opinions $x(0)\in[-1,1]^N$. Then the solution $x(t)$ to \eqref{Eqn: fixed network ODEs} satisfies $x(t)\in[-1,1]^N$ for all times $t\geq0$. Furthermore, $\lim_{t\rightarrow\infty} D(t)$ exists and lies in the interval $[0,2]$. 
\end{restatable}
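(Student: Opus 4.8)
The plan is to derive both claims from a single structural fact: the extreme opinions are monotone in time. Define $M(t) := \max_i x_i(t)$ and $m(t) := \min_i x_i(t)$, so that by \eqref{Eqn:diameter} we have $D(t) = M(t) - m(t)$. First I would show that $M$ is non-increasing and $m$ is non-decreasing; the boundedness claim and the convergence of $D$ then both follow immediately.

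For the monotonicity of $M$, the key observation is a sign computation at an index realising the maximum. Fix $t$ and let $i$ be any index with $x_i(t) = M(t)$. Then $x_j(t) - x_i(t) \leq 0$ for every $j$, while $w_{ij} \geq 0$, $\phi(|x_i - x_j|) \geq 0$ by the non-negativity in Assumption \ref{Assumption group: phi}, and $k_i > 0$. Hence every summand on the right-hand side of \eqref{Eqn: fixed network ODEs} is non-positive, so $\frac{dx_i}{dt} \leq 0$ at such an index. Because $M$ is the pointwise maximum of finitely many differentiable functions, it is Lipschitz, and its upper right Dini derivative satisfies $D^+ M(t) = \max\{\dot x_i(t) : x_i(t) = M(t)\} \leq 0$; a continuous function with non-positive Dini derivative is non-increasing, so $M$ is non-increasing. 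The symmetric argument at a minimising index gives $D^+ m(t) \geq 0$, i.e. $m$ is non-decreasing. This step requires $k_i > 0$ for all $i$, i.e. no isolated nodes; I would state this as a standing assumption (an isolated individual simply keeps a fixed opinion and may be excluded).

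The two claims now follow. Since $x(0) \in [-1,1]^N$ we have $m(0) \geq -1$ and $M(0) \leq 1$, so monotonicity yields $-1 \leq m(0) \leq m(t) \leq x_i(t) \leq M(t) \leq M(0) \leq 1$ for every $i$ and every $t \geq 0$, which is exactly $x(t) \in [-1,1]^N$. For the diameter, $D = M - m$ is the difference of a non-increasing and a non-decreasing function, hence non-increasing, and it is bounded below by $0$; a bounded monotone function converges, so $\lim_{t\to\infty} D(t)$ exists. The bounds $0 \leq D(t) \leq D(0) \leq 2$ then place the limit in $[0,2]$.

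The main obstacle is not the sign computation, which is elementary, but the analytic care demanded by the fact that $\phi$ need not be continuous (cf. the HK interaction function \eqref{Eqn: confidence indicator}), so that the right-hand side of \eqref{Eqn: fixed network ODEs} may be discontinuous. This raises the question of in what sense a solution exists and is differentiable; I would interpret $x(t)$ as an absolutely continuous (Carathéodory/Filippov) solution, which is implicitly what the statement's reference to \enquote{the solution} assumes, so that $\dot x_i \leq 0$ holds for almost every $t$ at a maximising index. The Dini-derivative comparison for the maximum of finitely many absolutely continuous functions remains valid in this almost-everywhere setting, leaving the monotonicity conclusions intact; this is the one place where I would take particular care and where a precise citation would be warranted.
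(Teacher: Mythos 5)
Your proof is correct and follows essentially the same route as the paper's: the sign computation at an extremal index shows the maximum opinion cannot increase and the minimum cannot decrease, from which boundedness and the monotone convergence of $D$ both follow. The extra care you take (the Dini-derivative argument for the pointwise maximum, the Carath\'eodory solution concept for discontinuous $\phi$, and the observation that $k_i>0$ is implicitly needed) goes beyond what the paper writes down but does not change the argument.
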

\begin{proof}
    As the population size $N$ is fixed and finite we can identify at any time the maximum and minimum opinions held in the population. Let $i$ be an individual holding the minimum opinion at time $t$, then $x_j \geq x_i$ for all $j\in\{1,\dots,N\}$ so
    \begin{equation}
        \frac{dx_i}{dt} = \frac{1}{k_i} \sum_{j\neq i} w_{ij}\,\phi\big(|x_i - x_j|\big)\,(x_j - x_i) \geq 0. 
    \end{equation}
    Similarly, any individual holding the maximum opinion at time $t$ will have a non-positive derivative. Thus the maximum opinion held in the population cannot increase and the minimum opinion held cannot decrease. So since $x_i(0)\in[-1,1]$ for all $i\in\{1,\dots,N\}$ we have that $x_i(t)\in[-1,1]$ for all $i\in\{1,\dots,N\}$, $t\geq 0$. More generally, opinions cannot leave any interval in which the initial opinions of all individuals lie. 

    In addition, this argument shows that $D(t)$ is a non-increasing function of time. Furthermore, it is bounded below by $0$, hence it must converge as $t\rightarrow\infty$. As $D(t)\in[0,2]$ for all $t\geq0$, the limit also lies in this interval. 
\end{proof} 

Next, we present a condition under which consensus in the fixed network model can be guaranteed. Lemma \ref{Lemma: MotschTadmor} below provides a useful intermediary result, the proof of which is a minor modification of the energy argument presented in Theorem 4.3 of \cite{motsch2014heterophilious} and is given in Appendix \ref{Appendix: Proofs}. 
\begin{restatable}{lemma}{MotschTadmor} \label{Lemma: MotschTadmor}

    Let Assumption \ref{Assumption group: phi} on $\phi$ be satisfied and assume $w\in[0,1]^{N \times N}$ is undirected. Fix $x(0)\in[-1,1]^N$ and let $x(t)$ be the solution to \eqref{Eqn: fixed network ODEs}. Then for any $\varepsilon > 0$ there exists a time $t^* < \infty$ such that
    \begin{equation*}
        \sum_{i,j} w_{ij}\,\phi\big(|x_i(t^*) - x_j(t^*)|\big)\,| x_j(t^*) - x_i(t^*)|^2 < \varepsilon\,.
    \end{equation*}
\end{restatable}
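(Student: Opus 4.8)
The plan is to run a degree-weighted second-moment (energy) argument. The key idea is to choose the energy
\begin{equation*}
    \mathcal{E}(t) := \sum_{i} k_i\, x_i(t)^2,
\end{equation*}
weighted by the degrees $k_i$ precisely so that the normalisation $1/k_i$ appearing in \eqref{Eqn: fixed network ODEs} cancels on differentiation. First I would compute
\begin{equation*}
    \frac{d\mathcal{E}}{dt} = 2\sum_i k_i x_i \dot{x}_i = 2\sum_{i,j} w_{ij}\,\phi\big(|x_i-x_j|\big)\,x_i(x_j - x_i),
\end{equation*}
where the factor $k_i$ has disappeared. The central step is to symmetrise this double sum: since $w$ is undirected and $\phi\big(|x_i-x_j|\big)$ is automatically symmetric in $i,j$, relabelling $i\leftrightarrow j$ and averaging the two resulting expressions, together with the identity $x_i(x_j-x_i)+x_j(x_i-x_j) = -(x_i-x_j)^2$, yields
\begin{equation*}
    \frac{d\mathcal{E}}{dt} = -\sum_{i,j} w_{ij}\,\phi\big(|x_i-x_j|\big)\,|x_i - x_j|^2.
\end{equation*}
Up to sign, the right-hand side is exactly the quantity the lemma asks us to make small.

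Next I would exploit monotonicity. By Assumption \ref{Assumption group: phi} (nonnegativity of $\phi$) and $w\geq 0$, the right-hand side is non-positive, so $\mathcal{E}$ is non-increasing; it is also bounded below by $0$, and $\mathcal{E}(0)=\sum_i k_i x_i(0)^2<\infty$. Hence $\mathcal{E}(t)$ converges as $t\to\infty$, and integrating the identity above gives
\begin{equation*}
    \int_0^\infty \sum_{i,j} w_{ij}\,\phi\big(|x_i-x_j|\big)\,|x_i-x_j|^2 \, dt = \mathcal{E}(0) - \lim_{t\to\infty}\mathcal{E}(t) < \infty.
\end{equation*}
The integrand is a nonnegative function of $t$ whose integral over $[0,\infty)$ is finite, so its $\liminf$ as $t\to\infty$ is zero. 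In particular, for every $\varepsilon>0$ there exists a (indeed arbitrarily large) finite time $t^*$ at which the integrand is below $\varepsilon$, which is precisely the claim.

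I expect the main obstacle to be not the algebra but the regularity required to differentiate $\mathcal{E}$ and apply the fundamental theorem of calculus, since $\phi$ is permitted to be discontinuous (for instance the HK indicator \eqref{Eqn: confidence indicator}), so the vector field in \eqref{Eqn: fixed network ODEs} need not be continuous. I would handle this by working with the appropriate Carathéodory (or Filippov) notion of solution, under which $t\mapsto x(t)$ is absolutely continuous; then $\mathcal{E}$ is absolutely continuous, the identity for $d\mathcal{E}/dt$ holds for almost every $t$, and this is enough to justify the integral bound. Global existence and boundedness of $\mathcal{E}$ are guaranteed by Proposition \ref{Prop: Boundedness (fixed network)}, which keeps the opinions in $[-1,1]^N$ and thereby closes the argument.
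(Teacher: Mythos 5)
Your proof is correct, but it takes a genuinely different route from the paper's. The paper follows Theorem 4.3 of Motsch--Tadmor and uses the pairwise energy $\mathcal{E}(t)=\sum_{i,j}w_{ij}\,\varphi(|x_i-x_j|)$ with $\varphi(r)=\int_0^r s\phi(s)\,ds$; its dissipation rate is $-2\sum_i k_i|\dot x_i|^2$, which is then related to the target quantity via a Cauchy--Schwarz step, yielding that the \emph{square} of $\sum_{i,j}w_{ij}\phi(|x_i-x_j|)|x_i-x_j|^2$ has finite time-integral. You instead take the degree-weighted second moment $\sum_i k_i x_i^2$, whose dissipation rate \emph{is} the target quantity, so you get its integrability (in fact a slightly stronger statement, and more directly) without the antiderivative $\varphi$ or the Cauchy--Schwarz step. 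Your simplification works precisely because the network is fixed, so the weights $k_i$ are constant in time and differentiating $\sum_i k_i x_i^2$ produces no extra term; in the original Motsch--Tadmor setting the normalising degrees are opinion-dependent, which is presumably why the paper inherits the more elaborate energy. Your symmetrisation correctly uses the undirectedness of $w$ and the symmetry of $\phi(|x_i-x_j|)$, and your concluding $\liminf$ argument from finiteness of the integral of a nonnegative integrand is the same one the paper uses implicitly. Your remark about regularity when $\phi$ is discontinuous (Carath\'eodory solutions, absolute continuity of the energy, the identity holding almost everywhere) is a genuine gap in the paper's own presentation that your write-up handles more carefully; the same caveat about needing $k_i>0$ for the model to be well defined applies equally to both arguments.
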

Broadly speaking, Lemma \ref{Lemma: MotschTadmor} states that for two individuals with $w_{ij}>0$ at some point in time either the distance between the two's opinion, or the interaction function at that distance, must be arbitrarily small. Therefore, if we eliminate the possibility of the interaction function becoming arbitrarily small, this would require all neighbours in $w$ to have `close' opinions. This motivates the following proposition. 
\begin{restatable}{proposition}{convergence} \label{Prop: Fixed network convergence}
     Let $w\in[0,1]^{N \times N}$ be a undirected and connected network and let $\phi$ satisfy Assumption \ref{Assumption group: phi}. Furthermore assume that there exists a constant $c > 0$ such that $\phi(r) > c$ for all $r\in[0,2]$. Then for any initial opinions $x(0)\in[-1,1]^N$, the opinion formation process in \eqref{Eqn: fixed network ODEs} reaches consensus. 
\end{restatable}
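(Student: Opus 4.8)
The plan is to argue by contradiction, leveraging two facts already available: that the diameter $D(t)$ is non-increasing with a well-defined limit (Proposition~\ref{Prop: Boundedness (fixed network)}), and that the weighted energy $\sum_{i,j} w_{ij}\,\phi(|x_i-x_j|)\,|x_j-x_i|^2$ can be made arbitrarily small at some time (Lemma~\ref{Lemma: MotschTadmor}). Writing $D_\infty := \lim_{t\to\infty} D(t) \in [0,2]$, monotonicity gives $D(t) \geq D_\infty$ for every $t\geq 0$. I would assume $D_\infty > 0$ and aim to derive a strictly positive lower bound on the energy that holds uniformly in time, directly contradicting the lemma.

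The crux --- and the step I expect to be the main obstacle --- is turning the global bound $D(t)\geq D_\infty$ into a lower bound on a single edge term, which is where connectivity enters. At any fixed $t$, pick a pair $i_0,j_0$ realising the diameter. Connectivity means $(w^N)_{i_0 j_0}>0$, so there is a walk $i_0=l_0,l_1,\dots,l_N=j_0$ all of whose edge weights $w_{l_k l_{k+1}}$ are strictly positive. The triangle inequality then gives
\[
D_\infty \leq |x_{i_0}(t)-x_{j_0}(t)| \leq \sum_{k=0}^{N-1} |x_{l_{k+1}}(t)-x_{l_k}(t)|,
\]
so at least one consecutive pair along the walk has $|x_{l_{k+1}}(t)-x_{l_k}(t)| \geq D_\infty/N$. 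Letting $w_{\min}>0$ be the smallest strictly positive entry of the fixed matrix $w$ and invoking the hypothesis $\phi(r)>c$, that single term already satisfies
\[
w_{l_k l_{k+1}}\,\phi\big(|x_{l_k}(t)-x_{l_{k+1}}(t)|\big)\,|x_{l_{k+1}}(t)-x_{l_k}(t)|^2 > c\,w_{\min}\left(\frac{D_\infty}{N}\right)^2 =: \varepsilon_0.
\]

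Because every summand in the energy is non-negative by Assumption~\ref{Assumption group: phi}, this single term forces
\[
\sum_{i,j} w_{ij}\,\phi\big(|x_i(t)-x_j(t)|\big)\,|x_j(t)-x_i(t)|^2 > \varepsilon_0 \quad \text{for all } t\geq 0,
\]
with $\varepsilon_0>0$ independent of $t$. Applying Lemma~\ref{Lemma: MotschTadmor} with $\varepsilon=\varepsilon_0$ would produce a time $t^*$ at which this same sum is $<\varepsilon_0$, a contradiction. Hence $D_\infty=0$ and the population reaches consensus.

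The delicate point to get right is \emph{uniformity}: the diameter-realising pair $(i_0,j_0)$, the chosen walk, and the index $k$ may all depend on $t$, but the constants $c$, $w_{\min}$, $N$ and $D_\infty$ do not, so the bound $\varepsilon_0$ is genuinely time-independent. I would also double-check that the hypotheses of Lemma~\ref{Lemma: MotschTadmor} --- namely $w$ undirected and $x(0)\in[-1,1]^N$ --- coincide with those assumed here, and that connectivity indeed supplies a positive-weight walk rather than merely an abstract path, which is exactly what $(w^N)_{i_0 j_0}>0$ encodes.
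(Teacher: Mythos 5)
Your argument is correct and is essentially the paper's own proof run in the contrapositive direction: the paper applies Lemma~\ref{Lemma: MotschTadmor} to show all positive-weight edges have small opinion gaps at some time and then chains them via connectivity to bound the diameter, while you assume $D_\infty>0$ and chain via connectivity to exhibit one edge whose energy contribution stays bounded below, contradicting the lemma. Same key lemma, same use of the minimal positive weight $w_{\min}$, the path length $N$, and the bound $\phi>c$, so this is the same approach in all essentials.
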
 
\begin{proof} 
    Define $w_m$ to be the smallest non-zero weight in $w$, 
    \begin{equation}
        w_m = \min_{i,j}\, \{w_{ij} \colon w_{ij} > 0 \}. 
    \end{equation}
    Take any $\varepsilon > 0$. As $w$ is undirected, by Lemma \ref{Lemma: MotschTadmor} there exists a time $s$ such that 
    \begin{equation*}
        \sum_{i,j} w_{ij}\,\phi\big(|x_i(s) - x_j(s)|\big)\,|x_j(s) - x_i(s)|^2 < \frac{c \varepsilon^2 w_m}{N^2}\,.
    \end{equation*}
    As all the terms in this sum are positive and $\phi(r) > c$ for all $r\in[0,2]$, we have that for any pair of individuals $i,j$, the product
    \begin{equation*}
        w_{ij}\,|x_j(s) - x_i(s)|^2 < \frac{\varepsilon^2 w_m}{N^2}\,.
    \end{equation*}
    Hence for all $i,j$, either $w_{ij} = 0$ or 
    \begin{equation*}
        |x_j(s) - x_i(s)|^2 < \Big( \frac{\varepsilon}{N} \Big)^2\,.
    \end{equation*}
    As $w$ is connected there exists a path in $w$ of length at most $N$ between any two individuals. Hence at time $s$, since all neighbouring individuals' opinions are a distance strictly less than $\varepsilon/N$ apart, the diameter of the population's opinions $D(T) < \varepsilon$. As $D$ is non-increasing in time this means $D(t) < \varepsilon$ for all $t \geq T$, hence $D(t)\rightarrow0$ as $t\rightarrow\infty$ and the population reaches consensus. 
\end{proof}

Proposition~\ref{Prop: Fixed network convergence} not only provides a condition for consensus, but also gives some intuition about the causes of disagreement. If the population does not reach consensus then either the network $w$ must be directed or disconnected, or the interaction function $\phi$ must become arbitrarily small at some distance. 

\begin{figure}[ht!]
    \centering
    \includegraphics[width = \linewidth, trim = {1cm 2cm 1cm 2cm}, clip]{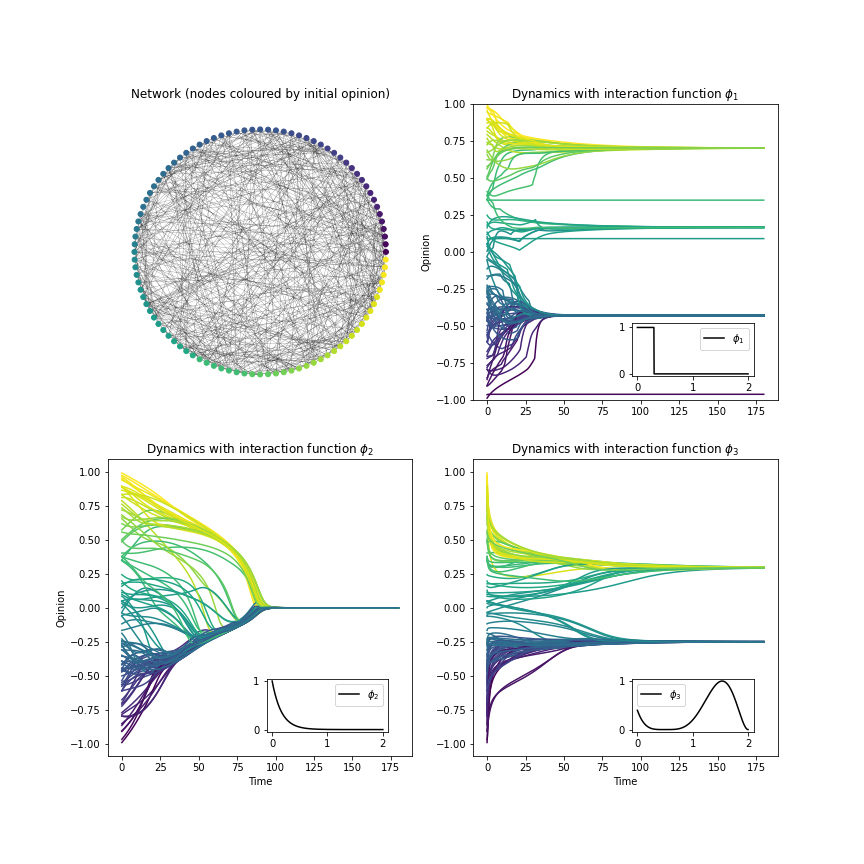}
    \caption{Example dynamics of the fixed network model. The top left panel shows the network, with nodes coloured by their initial opinions. The remaining panels show the opinion formation process over time for the three different interaction functions in \eqref{Eqn: example interaction functions}. The same network and initial conditions are used for each example. Each line represents the opinion of one individual over time and is coloured according to the individual's initial opinion.}
    \label{fig:fixed network example interaction functions}
\end{figure}

We now show some example dynamics for the fixed network model in Fig.\ref{fig:fixed network example interaction functions}. The top left panel shows a network of $N = 100$ nodes, with nodes coloured by their initial opinions in $[-1,1]$. Initial opinions were chosen from a uniform distribution on $[-1,1]$. The network is an Erd\H{o}s-R\'enyi random network with edge probability $p = 0.1$ \cite{erdHos1960evolution}. In this example, the network is connected. The remaining panels of Fig.\ref{fig:fixed network example interaction functions} show the opinion formation process over time for three different interaction functions, with the same network and initial conditions. The interaction functions used are:
\begin{equation} \label{Eqn: example interaction functions}
    \phi_1(r) = 
    \begin{cases}
        1 & \text{if  } r < \frac{3}{10} \\
        0 & \text{if  } r \geq \frac{3}{10} 
    \end{cases} ,\quad
    \phi_2(r) = e^{-6r} ,\quad
    \phi_3(r) = \frac{8}{5} \bigg(r - \frac{1}{2} \bigg)^4 \big(r+1 \big) \big(r-2 \big)^2 ,
\end{equation}
for $r\in[0,2]$ and are shown in insets in each panel. All three interaction functions satisfy Assumption \ref{Assumption group: phi}. Both $\phi_1$ and $\phi_2$ are decreasing functions, with the former being discontinuous and the latter continuous. In addition, $\phi_2$ is strictly positive over the entire domain, meaning Proposition \ref{Prop: Fixed network convergence} guarantees the consensus we observe. Conversely $\phi_1$ is zero for $r\in[\frac{3}{10},2]$, so many pairs of individuals do not interact. In fact, we observe some individuals who never change their opinions. As is typical in bounded confidence models \cite{hegselmann2015opinion}, a number of opinion clusters form. This clustering is also observed in the dynamics under interaction function $\phi_3$. Function $\phi_3$ allows for interaction between both close and distant individuals. Allowing interactions between distant individuals initially brings those with extreme opinions closer to the centre. Interactions between closer individuals then causes clustering. However, the gap between these two modes around $r = 0.5$ means that two clusters form at this distance apart: too close for long range interactions and too far for short range ones. Recall that Lemma \ref{Lemma: MotschTadmor} requires either the interaction function or the distance between points to become arbitrarily small at some point in time. The two clusters lying a distance apart at which $\phi_3$ is zero satisfies Lemma \ref{Lemma: MotschTadmor} without then leading to consensus. While the general behaviour of the model is clearly the formation of opinion clusters, the free choice of interaction function creates great diversity in where, how many and how these clusters form. 

It is interesting to note that replacing the interaction function with another that is higher at every point does not always yield greater agreement. Intuitively, increasing the interaction function increases the amount of communication between individuals, leading to more agreement over time. Yet while this is often the case it is not necessarily true. Increasing the interaction function can also lead to faster local clustering, potentially preventing those interactions between clusters that would be required to achieve a global consensus. Increasing edge weights can have a similar effect. Specific examples are given in Appendix \ref{Appendix: Additional examples} in Examples \ref{example: increasing phi} and \ref{example: increasing w}. 

Another interesting feature of the fixed network model is the emergence of `isolated individuals'. Many common interaction functions reduce interaction strength as the distance between individuals' opinions grows, hence an individual with a small number of neighbours in $w$ and a relatively extreme opinion may have few or no interactions. Considering in particular those interaction functions where the interaction strength is zero beyond a critical value, some individuals may become totally isolated from the rest of the population. While it is possible for these individuals to interact with the rest of the network in the future, during the time in which they are isolated they will not cause any other individual's opinion to be moved towards their own. In addition, the opinion of an isolated individual will not move closer to that of the rest of the population. Hence, as is seen for $\phi_1$ in Fig.\ref{fig:fixed network example interaction functions}, isolated individuals typically remain isolated. This also causes their (often extreme) opinions to be underrepresented in the opinion formation process. The introduction of a dynamic network offers one solution to this problem. 

\section{Dynamic network} \label{Section: Dynamic network} 

In a real social network the relationships between individuals, and the extents to which they trust each others' opinions, are not static: new connections might occur either randomly or through mutual acquaintances and old connections may decay. To capture these effects we propose an extension to the fixed network model that couples the $N$ ODEs for opinion formation $(x_i)$ with $N^2$ ODEs for the edge weights $(w_{ij})$. 

As discussed previously, we interpret the interaction function as describing the extent to which individuals $i$ and $j$ interact at a given time based on the difference in their opinions, and interpret the edge weight as the strength of their relationship. It is not necessary that at a given moment in time these two factors should be related. However, new relationships between individuals do not immediately carry a strong weight, but instead build up this weight through continued interaction. Moreover, existing relationships decay if individuals do not continue to interact, or only interact weakly. We therefore propose dynamics of $w_{ij}$, that is the changes in the relationship between individuals $i$ and $j$, that are driven by their interactions. 

To model this we define two functions $f^+(w), f^-(w):[0,1]^{N \times N} \rightarrow \mathds{R}^{N \times N}$. The function $f^+$ describes how weights may increase and $f^-$ describes how they may decrease. The overall change in $w_{ij}$ is a weighting of $f^+(w)_{ij}$ and $-f^-(w)_{ij}$ (where $f(w)_{ij}$ denotes the $ij^\text{th}$ components of the function). We place several assumptions on these functions:

\begin{assumption} \label{Assumption group: f}
    The functions $f^+:[0,1]^{N \times N}\rightarrow \mathbb{R}^{N \times N}$ and $f^-\colon [0,1]^{N \times N} \rightarrow \mathbb{R}^{N \times N}$ are both non-negative and continuously differentiable. Furthermore they satisfy
    \begin{enumerate}[label=\alph*)]
        \item For any $i,j\in\{1,\dots,N\}$ and $w\in[0,1]^{N\times N}$, $w_{ij} = 1 \Rightarrow f^+(w)_{ij} \leq 0 $. 
        \item For any $i,j\in\{1,\dots,N\}$ and $w\in[0,1]^{N\times N}$, $w_{ij} = 0 \Rightarrow f^-(w)_{ij} \geq 0 $. 
    \end{enumerate}
\end{assumption}
These last two assumptions will ensure weights remain in the interval $[0,1]$. Regarding initial conditions, we assume that $w_{ij}(0)\in[0,1]$ for $i\neq j$ and $w_{ii}(0)=1$ for all $i\in\{1,\dots,N\}$. Then the opinion formation on a dynamic network reads as
\begin{subequations} \label{Eqn: general dynamic network system}
\begin{align}
    \frac{dx_i}{dt} &= \frac{1}{k_i(t)} \sum_{j\neq i} w_{ij}\,\phi\big(|x_i - x_j|\big)\,(x_j - x_i) \,, & i = 1,\dots,N \label{Eqn: general dynamic network system x_i}\\
    \frac{dw_{ij}}{dt} &= \phi\big(|x_i - x_j|\big)\, f^+(w)_{ij} - \Big(1 - \phi\big(|x_i - x_j|\big)\Big)\, f^-(w)_{ij} \,, & i,j = 1,\dots,N,\, i\neq j \label{Eqn: general dynamic network system w_ij} \\[0.5em]
    \frac{dw_{ii}}{dt} &= 0 & i = 1,\dots,N.
\end{align}
\end{subequations}
We refer to this as the dynamic network model. Note that if both $f^+$ and $f^-$ are identically equal to zero, this gives the fixed network model. 
\begin{remark}
We recall  that $k_i(t) = \sum_{j=1}^N w_{ij}$. 
    In the dynamic network model this prefactor is dynamic. Furthermore, as the derivative of $w_{ii}$ is zero, $w_{ii}(t) = w_{ii}(0) = 1$ for all $t\geq0$, hence $k_i(t) \geq 1 > 0$ for all $t\geq0$. This ensures that the division by $k_i(t)$ is always defined. 
\end{remark}

\subsection{General properties} \label{Section: General properties}

As with the fixed network model, we first show that the dynamic network model is well-defined in the context of opinion dynamics. 

\begin{restatable}{proposition}{boundedness}  \label{Prop: Boundedness}
    Let Assumption \ref{Assumption group: phi} on $\phi$ and Assumption \ref{Assumption group: f} on $f^+$ and $f^-$ be satisfied. For $x(0)\in[-1,1]^N$ and $w(0)\in[0,1]^{N\times N}$, let $x(t)$ and $w(t)$ be the solution to \eqref{Eqn: general dynamic network system}. Then $x(t)\in[-1,1]^N$ and $w(t)\in[0,1]^{N\times N}$ for all times $t\geq 0$.
\end{restatable}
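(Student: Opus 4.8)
The plan is to show that the box $K:=[-1,1]^N\times[0,1]^{N\times N}$ is forward-invariant for the coupled system \eqref{Eqn: general dynamic network system}, by verifying that on each face of $K$ the vector field points into $K$ (a Nagumo-type tangency argument). Since the system is coupled, I would make this rigorous via a first-exit-time argument: assuming a solution leaves $K$, let $t^*$ be the first time it reaches $\partial K$; up to $t^*$ every coordinate still lies in $K$, so Assumption \ref{Assumption group: phi} and Assumption \ref{Assumption group: f} may all be invoked at $t^*$, and I would derive a contradiction from the sign of the relevant derivative there. Conveniently, the two halves essentially decouple: the weight bounds need only $\phi\in[0,1]$ (which holds by the definition of $\phi$) together with $w\in[0,1]$, whereas the opinion bounds need only $w_{ij}\ge0$ and $k_i>0$.

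For the weight coordinates, consider a face of the form $\{w_{ij}=1\}$ or $\{w_{ij}=0\}$. On $\{w_{ij}=1\}$ the growth term satisfies $\phi\,f^+(w)_{ij}\le 0$ by Assumption \ref{Assumption group: f} (using $\phi\ge0$), while the decay term contributes $-(1-\phi)f^-(w)_{ij}\le0$ because $1-\phi\ge0$ (as $\phi\le1$) and $f^-\ge0$; hence $\tfrac{dw_{ij}}{dt}\le0$ and $w_{ij}$ cannot exceed $1$. Symmetrically, on $\{w_{ij}=0\}$ the growth term $\phi\,f^+(w)_{ij}\ge0$ is non-negative and Assumption \ref{Assumption group: f} ensures that the decay term cannot push $w_{ij}$ below zero, so $\tfrac{dw_{ij}}{dt}\ge0$. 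Note that these computations use only $\phi(|x_i-x_j|)\in[0,1]$, so the invariance of $[0,1]^{N\times N}$ holds independently of the opinion bounds.

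For the opinion coordinates, I would reuse the argument of Proposition \ref{Prop: Boundedness (fixed network)}: on the face $\{x_i=1\}$ all other opinions satisfy $x_j\le1$, so each summand $w_{ij}\,\phi(|x_i-x_j|)\,(x_j-x_i)\le0$ and thus $\tfrac{dx_i}{dt}\le0$; the face $\{x_i=-1\}$ is handled in the same way. The only inputs are $w_{ij}\ge0$ (supplied by the previous step) and $k_i(t)>0$; the latter is guaranteed by the remark following \eqref{Eqn: general dynamic network system}, since $w_{ii}\equiv1$ forces $k_i(t)\ge1$. Equivalently, one may phrase this through the monotonicity of $M(t)=\max_i x_i(t)$ and $m(t)=\min_i x_i(t)$. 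Combining the two steps yields $x(t)\in[-1,1]^N$ and $w(t)\in[0,1]^{N\times N}$ for all $t\ge0$.

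The main obstacle is not the sign bookkeeping, which is routine, but the low regularity of the right-hand side: because $\phi$ need not be continuous, the right-hand side of \eqref{Eqn: general dynamic network system} is discontinuous in $(x,w)$, so solutions must be understood in the Carathéodory sense and the boundary derivative conditions hold only almost everywhere. To make the first-exit argument airtight I would work with the absolutely continuous maps $t\mapsto x_i(t)$ and $t\mapsto w_{ij}(t)$ and replace ``the derivative has the right sign'' by a statement about upper/lower Dini derivatives of $M(t)$, $m(t)$ and of the individual weights, which is all the monotonicity conclusions require. The continuous differentiability of $f^+$ and $f^-$ in Assumption \ref{Assumption group: f} supplies the regularity in the $w$-variables needed for this step.
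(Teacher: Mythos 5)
Your proposal is correct and follows essentially the same route as the paper: check the sign of $\tfrac{dw_{ij}}{dt}$ on the faces $\{w_{ij}=0\}$ and $\{w_{ij}=1\}$ using Assumption \ref{Assumption group: f} together with $\phi\in[0,1]$, then bound the opinions exactly as in Proposition \ref{Prop: Boundedness (fixed network)} via the monotonicity of the extremal opinions, noting $k_i(t)\ge1$. The additional remarks on Carath\'eodory solutions and Dini derivatives supply rigour the paper leaves implicit but do not change the argument.
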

\begin{proof}
    As noted above $w_{ii}(t) = 1 \in [0,1]$ for all $t\geq0$. Let $i\neq j$ and consider $w_{ij}$. If $w_{ij} = 0$ then by Assumption \ref{Assumption group: f}, $f^-(w)_{ij} \geq 0$ and 
    \begin{equation}
        \frac{dw_{ij}}{dt} \geq \phi\big(|x_i - x_j|\big)\, f^+(w)_{ij} \geq 0.
    \end{equation}
    Hence $w_{ij}$ cannot fall below zero. If $w_{ij} = 1$ then, by Assumption \ref{Assumption group: f}, $f^+(w)_{ij} \leq 0$ and 
    \begin{equation}
        \frac{dw_{ij}}{dt} \leq \Big(1 - \phi\big(|x_i - x_j|\big)\Big)\, f^-(w)_{ij} \leq 0.
    \end{equation}
    Hence $w_{ij}$ cannot exceed one. So, since $w_{ij}(0)\in[0,1]$ we have that $w_{ij}(t)\in[0,1]$ for all $t\geq 0$. The remainder of the proof proceeds as in Proposition \ref{Prop: Boundedness (fixed network)}.
\end{proof} 

We now give a condition under which the system can be split into multiple smaller, independent systems. 

\begin{restatable}{proposition}{Independence}  \label{Prop: Independence}
    Let $\phi$ satisfy Assumption \ref{Assumption group: phi} and assume also that $\phi$ is a decreasing function with $\phi(0)>0$. Let $\Phi$ be a second adjacency matrix defined by $\Phi_{ij} = \phi\big(|x_i(0) - x_j(0)| \big)$. If individuals $i$ and $j$ are in different connected components of the network with adjacency matrix $\Phi$, then $\phi\big(|x_i(t) - x_j(t)|\big) = 0$ for all $t\geq0$. That is, individuals $i$ and $j$ never interact. 
\end{restatable}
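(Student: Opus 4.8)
The plan is to prove the stronger statement that the entire block of individuals containing $i$ never interacts with the entire block containing $j$, and to reduce everything to monitoring a single \enquote{gap} between two groups of opinions.

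First I would record the structure that $\phi$ decreasing forces on the components of $\Phi$. Since $\phi$ is decreasing with $\phi(0)>0$, the set $\{r:\phi(r)>0\}$ is an interval $[0,R)$ or $[0,R]$ for a threshold $R\in(0,2]$, and $\Phi_{kl}>0$ is equivalent to $|x_k(0)-x_l(0)|$ lying below this threshold. I then claim each connected component of $\Phi$ is order-convex in opinion space: if $k,l$ lie in the same component with $x_k(0)\le x_l(0)$ and some $p$ has $x_k(0)\le x_p(0)\le x_l(0)$, then walking along a connecting path one finds a consecutive pair $(a,b)$ straddling $x_p(0)$; since $\phi$ is decreasing, $\phi(|x_a(0)-x_p(0)|)\ge\phi(|x_a(0)-x_b(0)|)>0$, so $p$ joins the same component. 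Hence components are intervals, and between two consecutive components the gap is at least the threshold (otherwise their boundary individuals would be $\Phi$-connected). Fixing $i,j$ in different components, I pick a value $\theta$ strictly inside an inter-component gap lying between them and set $G_L=\{k:x_k(0)<\theta\}$, $G_R=\{k:x_k(0)>\theta\}$, so that $i\in G_L$, $j\in G_R$ and every cross pair $(k,l)$, $k\in G_L$, $l\in G_R$, satisfies $\phi(|x_k(0)-x_l(0)|)=0$ at $t=0$ (it suffices to check the closest pair, which realises the smallest distance).

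The core is a monotonicity estimate valid while the two blocks do not interact. Write $M(t)=\max_{k\in G_L}x_k(t)$ and $m(t)=\min_{k\in G_R}x_k(t)$, and suppose $\phi(|x_k-x_l|)=0$ for all cross pairs on $[0,\tau]$. Then in \eqref{Eqn: general dynamic network system x_i} every cross term carries the factor $\phi(|x_k-x_l|)=0$ and therefore vanishes \emph{regardless of the edge weights}; in particular the weight dynamics play no role here. Consequently an individual attaining $M(t)$ feels only neighbours in $G_L$, all with opinion $\le M(t)$, so its derivative is non-positive; exactly as in the proof of Proposition \ref{Prop: Boundedness (fixed network)} this shows $M$ is non-increasing and, symmetrically, $m$ is non-decreasing on $[0,\tau]$. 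Hence the gap $g(t):=m(t)-M(t)$ is non-decreasing while the blocks do not interact, so $g(t)\ge g(0)$, and since $\phi$ is decreasing, $\phi(g(t))\le\phi(g(0))=0$, so the closest (hence every) cross pair still fails to interact.

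I would close the argument by a first-interaction-time bootstrap. Let $T=\sup\{t:\text{no cross interaction on }[0,t]\}$. By the previous step $g\ge g(0)$ on $[0,T)$, so by continuity of the opinions $g(T)\ge g(0)$ and $\phi(g(T))\le\phi(g(0))=0$: there is still no interaction at $T$, so the non-interaction set is closed and it remains to exclude $T<\infty$. This is the main obstacle, and it is delicate precisely because $\phi$ need not be continuous (e.g.\ the bounded-confidence $\phi_R$): if $g(T)$ equals the threshold exactly, one must show that the gap cannot be pushed below it immediately afterwards. The resolution is that the monotonicity lemma gives $\dot g\ge0$ at \emph{every} instant with no interaction, while any cross interaction could only \emph{decrease} $g$; since all opinions are Lipschitz in $t$ (with constant controlled by $\phi\le1$ and $|x_k-x_l|\le2$), this instantaneous non-decrease at the threshold, made rigorous through the behaviour of $M$ and $m$, forbids $g$ from dipping below the threshold on any interval to the right of $T$. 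This yields a contradiction, so $T=\infty$ and $i,j$ never interact.
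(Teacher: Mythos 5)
Your proof is correct and follows essentially the same route as the paper's: order the initial opinions, locate a consecutive gap at which $\phi$ vanishes (using that $\phi$ is decreasing so it vanishes at all larger distances), split the population into the two resulting blocks, and use monotonicity of the blockwise extremes to conclude the gap never shrinks, so every cross-pair interaction stays at zero. The first-interaction-time subtlety you flag as the main obstacle is genuine, but the paper handles it no more rigorously than you do --- it simply invokes the Proposition~\ref{Prop: Boundedness (fixed network)} argument on each block --- so your explicit bootstrap is a refinement of, not a departure from, the paper's proof.
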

\begin{proof}

    Firstly note that as $\Phi$ is determined only by the interaction function $\phi$, it is independent of the social network $w$. Assume two individuals $i$ and $j$ are in different connected components of $\Phi$. Relabel the individuals in the population according to their initial opinion, so that $x_1(0) \leq x_2(0) \leq \dots \leq x_N(0)$. Let $i'$, $j'$ be the new indices of individuals $i$ and $j$ respectively. Without loss of generality assume that $i' < j'$. 
    
    For $n = 2,\dots,N-1$ let $d_n = |x_{n}(0) - x_{n-1}(0)| \geq 0$ and consider the values $\phi(d_n)$. If the values $\phi(d_n)$ were all strictly positive then all individuals would be in the same connected component of $\Phi$. As $i'$ and $j'$ are assumed to be in different connected components of $\Phi$ there must exist an $n$ with $\phi(d_n) = 0$ and $x_{i'}(0) \leq x_{n-1}(0) < x_n(0) \leq x_{j'}(0)$. Note that the strict inequality $x_{n-1}(0) < x_n(0)$ arises since $\phi(0)>0$ and $\phi(d_n)=0$, so $d_n = |x_{n}(0) - x_{n-1}(0)| > 0$. 
    
    This partitions the population into two groups: individuals $1,\dots,n-1$ with initial opinions in the interval $[-1,x_{n-1}]$, and individuals $n,\dots,N$ with initial opinions in the interval $[x_{n},1]$. Individual $i'$ lies in the first of these and $j'$ in the second. There is a gap between these groups of length $d_n > 0$. As $\phi$ is assumed to be a decreasing function, $\phi(r) = 0$ for any $r \geq d_n$, hence there are no interactions between individuals $1,\dots,n-1$ and individuals $n,\dots,N$. 
    
    Using a similar argument as in Proposition \ref{Prop: Boundedness (fixed network)}, the maximum opinion held amongst the individuals in each of these groups cannot increase, and the minimum opinion held within each group cannot decrease. Therefore the gap distance between these two groups will always be at least $d_n$, meaning the distance between individuals $i'$ and $j'$ will always be at least $d_n$, so $\phi(|x_i(t) - x_j(t)|) = 0$ for all $t\geq 0$.  
\end{proof}
This proposition shows that for some interaction functions, for example, the bounded confidence interaction function, the system can be separated into distinct independent components. Furthermore, if the system reaches a state in which $\Phi$ has different connected components then these will be independent from that time onward. 

A common property of opinion formation models is that individuals' opinions remain in their initial order \cite{hegselmann2002opinion}. However, it should be noted that this property does not always hold in  the dynamic network model. A specific example is shown in Example \ref{example: Order change}. 

Another property that is not necessarily preserved under these dynamics is the symmetry of $w$, meaning that even if $w(0)$ is undirected, $w(t)$ may be directed at some later time. Proposition \ref{Prop: Undirected} provides a condition that guarantees this switch from an undirected to a directed network does not occur. 
\begin{definition}
    We call a function $f:[0,1]^{N\times N} \rightarrow \mathds{R}^{N \times N}$ symmetry preserving if $f(w)_{ij} = f(w)_{ji}$ for all $i,j \in \lbrace 1, \ldots N \rbrace$ for any symmetric matrix $w\in [0,1]^{N\times N}$. 
\end{definition}
\begin{restatable}{proposition}{undirected} \label{Prop: Undirected} 
    Let $\phi$ satisfy Assumption \ref{Assumption group: phi}. Let $f^+$ and $f^-$ satisfy Assumption \ref{Assumption group: f}. In addition, assume both $f^+$ and $f^-$ are symmetry preserving. For a symmetric initial weight matrix $w(0)\in[0,1]^{N\times N}$ and an initial opinion vector $x(0)\in[-1,1]^N$, let $x(t)$ and $w(t)$ be the solution to \eqref{Eqn: general dynamic network system}. Then $w(t)$ is symmetric for any $t\geq 0$.
\end{restatable}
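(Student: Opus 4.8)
The plan is to prove that the antisymmetric part of $w$ can never leave zero, using a Gronwall estimate. Introduce the matrix $A(t) := w(t) - w(t)^\top$, so that $A_{ij}(t) = w_{ij}(t) - w_{ji}(t)$; note $A_{ii}\equiv 0$ always and $A(0)=0$ since $w(0)$ is symmetric. The claim $w(t)=w(t)^\top$ is exactly $A(t)\equiv 0$, so it suffices to show $\|A(t)\|_F \equiv 0$, where $\|\cdot\|_F$ denotes the Frobenius norm.

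First I would isolate what is already symmetric in the dynamics \eqref{Eqn: general dynamic network system w_ij}. The interaction coefficient $\phi(|x_i-x_j|)$ is automatically symmetric in $i,j$ because $|x_i-x_j|=|x_j-x_i|$, and by Proposition \ref{Prop: Boundedness} the trajectory stays in the compact set $[0,1]^{N\times N}$. Differentiating and cancelling the symmetric coefficients gives, for $i\neq j$,
\begin{equation*}
    \frac{dA_{ij}}{dt} = \phi\big(|x_i-x_j|\big)\,\big[f^+(w)_{ij}-f^+(w)_{ji}\big] - \Big(1-\phi\big(|x_i-x_j|\big)\Big)\,\big[f^-(w)_{ij}-f^-(w)_{ji}\big].
\end{equation*}
The only obstruction to symmetry is therefore the differences $f^\pm(w)_{ij}-f^\pm(w)_{ji}$, which the symmetry-preserving hypothesis controls only when $w$ is already symmetric — precisely what we are trying to establish. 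Breaking this circularity is the main obstacle.

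The key step resolves it by comparison with the symmetric part $s(t):=\tfrac12\big(w(t)+w(t)^\top\big)$, which is symmetric and, being an entrywise average of two matrices in $[0,1]^{N\times N}$, again lies in $[0,1]^{N\times N}$ (with $s_{ii}=1$). Hence $f^\pm(s)_{ij}=f^\pm(s)_{ji}$ by the symmetry-preserving assumption. Since each $f^\pm$ is continuously differentiable on the compact set $[0,1]^{N\times N}$, it is globally Lipschitz there with some constant $L^\pm$. Inserting and removing $f^\pm(s)$ and using $\|w-s\|_F=\tfrac12\|A\|_F$, I would bound
\begin{equation*}
    \big|f^\pm(w)_{ij}-f^\pm(w)_{ji}\big| \le \big|f^\pm(w)_{ij}-f^\pm(s)_{ij}\big| + \big|f^\pm(s)_{ji}-f^\pm(w)_{ji}\big| \le L^\pm\,\|A\|_F.
\end{equation*}
Because $\phi(|x_i-x_j|)$ and $1-\phi(|x_i-x_j|)$ both lie in $[0,1]$, this yields $\big|dA_{ij}/dt\big|\le (L^++L^-)\|A\|_F$ for every $i,j$.

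Finally I would close with an energy/Gronwall argument on $\|A\|_F^2=\sum_{ij}A_{ij}^2$. Differentiating and applying the pointwise bound together with Cauchy–Schwarz ($\sum_{ij}|A_{ij}|\le N\|A\|_F$) gives $\tfrac{d}{dt}\|A\|_F^2 \le 2N(L^++L^-)\|A\|_F^2$ for almost every $t$. Since $\|A(\cdot)\|_F^2$ is absolutely continuous (as $w$ solves the ODE), the integral form of Gronwall's inequality applies; this robustness is why I prefer the differential-inequality route over an invariance-plus-uniqueness argument, as $\phi$ may be discontinuous and uniqueness is not assumed. With $\|A(0)\|_F^2=0$, Gronwall forces $\|A(t)\|_F^2\equiv 0$, hence $w(t)=w(t)^\top$ for all $t\ge0$.
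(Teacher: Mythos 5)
Your proof is correct, but it takes a more quantitative route than the paper. The paper's own argument is a two-line invariance statement: since $\phi(|x_i-x_j|)=\phi(|x_j-x_i|)$ and $f^{\pm}(w)_{ij}=f^{\pm}(w)_{ji}$ whenever $w$ is symmetric, the derivatives of $w_{ij}$ and $w_{ji}$ coincide, so symmetry is preserved. As you correctly observe, read literally this is circular — the hypothesis only controls $f^{\pm}$ \emph{on} the symmetric matrices, and one needs either a uniqueness argument (compare $w(t)$ with $w(t)^{\top}$, which solves the same system from the same data) or an a priori estimate to make it rigorous. Your resolution — comparing $f^{\pm}(w)$ with $f^{\pm}$ evaluated at the symmetrisation $s=\tfrac12(w+w^{\top})$, using that $s$ stays in the compact domain $[0,1]^{N\times N}$ where the continuously differentiable $f^{\pm}$ are Lipschitz, and then running Gronwall on $\|w-w^{\top}\|_F^2$ — is sound and, importantly, does not require uniqueness of solutions, which matters here since $\phi$ is not assumed continuous and only the boundedness result (Proposition \ref{Prop: Boundedness}) is available to keep the trajectory in the domain of $f^{\pm}$. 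What the paper's approach buys is brevity and the essential intuition; what yours buys is an argument that is actually closed under the stated hypotheses, at the cost of invoking the $C^1$ (hence Lipschitz) regularity of $f^{\pm}$ from Assumption \ref{Assumption group: f}, which the paper's sketch never uses.
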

\begin{proof}
    For any symmetric $w$ we have $f^+(w)_{ij} = f^+(w)_{ji}$ and $f^-(w)_{ij} = f^-(w)_{ji}$ for all $i,j$. As $\phi$ is a function of the distance between points we also have that $\phi(|x_i - x_j|) = \phi(|x_j - x_i|)$, hence the derivative of $w_{ij}$ is equal to that of $w_{ji}$ and $w$ remains symmetric. 
\end{proof}
\begin{remark}
    Note that $f^+$ or $f^-$ failing to be symmetry preserving does not mean that the symmetry of $w(0)$ will necessarily be lost. For example, $f^+$ may fail to be symmetry preserving only for a single symmetric network. If this network cannot be reached from given initial conditions then the symmetry of $w$ would be preserved. 
\end{remark}

\subsection{Examples} \label{Section: examples}

We now introduce three example weight dynamics that have an increasingly significant impact on the opinion formation process. Note that all three examples are symmetry preserving. 

\textbf{Logistic weight dynamics:} The purpose of logistic weight dynamics is to study the effect of changing edge weights only, without creating any new edges. 
\begin{equation}
    f^+(w)_{ij} = w_{ij}(1 - w_{ij}) \,,\quad f^-(w)_{ij} = w_{ij}(1 - w_{ij}) \,.
\end{equation}
This gives
\begin{align}
    \label{eqn: logistic weight dynamics}
    \frac{dw_{ij}}{dt} &= \phi\big(|x_i - x_j|\big)\,w_{ij}\,(1-w_{ij}) - \big(1 - \phi\big(|x_i - x_j|\big)\big)\,w_{ij}\,(1-w_{ij}) \nonumber\\
    &= w_{ij}\,(1-w_{ij})\,(2\phi\big(|x_i - x_j|\big) - 1 )
\end{align}
If the weight of an edge is zero or one it will remain at this fixed point. If $w_{ij}\in(0,1)$ the edge weight will increase towards $1$ if $i$ and $j$ interact sufficiently strongly ($\phi\geq0.5$), and decrease towards zero otherwise. 

\textbf{Friend-of-a-friend weight dynamics:} 
Friend-of-a-friend (FOAF) weight dynamics aims to add a mechanism through which new edges could be added to the network.
\begin{equation}
    f^+(w)_{ij} = \big( w_{ij} + N^{-1} (w^2)_{ij} \big)\,(1-w_{ij}) \,,\quad f^-(w)_{ij} = w_{ij} \,.
\end{equation}
This gives
\begin{align}
    \label{eqn: FOAF weight dynamics}
    \frac{dw_{ij}}{dt} &= \phi\big(|x_i - x_j|\big)\,\big( w_{ij} + N^{-1} (w^2)_{ij} \big)\,(1-w_{ij}) - \big(1 - \phi\big(|x_i - x_j|\big)\big)\,w_{ij} 
\end{align}
\begin{figure}[H]
    \centering
    \includegraphics[width=\linewidth]{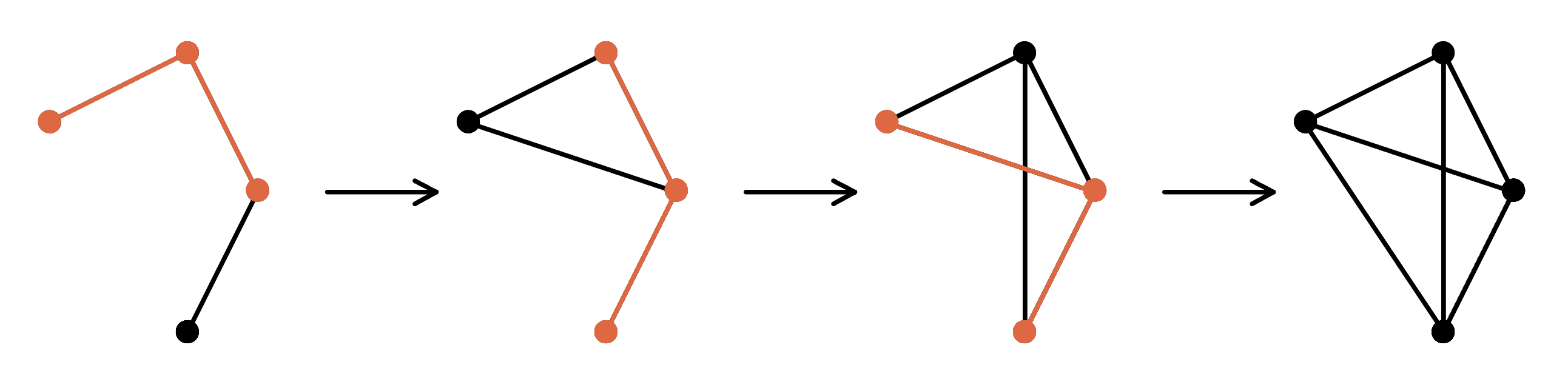}
    \caption{Successive triadic closures creating a complete network. In each step, the triad that will be closed is shown in orange.}
    \label{fig: triadic closure diagaram}
\end{figure}
The addition of the $(w^2)_{ij}$ term in $f^+_{ij}$ creates the possibility for an edge to form between individuals $i$ and $j$ if they are both connected to a `mutual friend', that is there exists some individual $\ell$ such that $w_{i\ell} w_{\ell j} > 0$. This allows the social network to grow through the process of `triadic closure', as shown in Fig.\ref{fig: triadic closure diagaram}. This is an established idea in both network and social sciences \cite{rapoport1953spread,heider1982psychology} and is thought to play an important role in the structure of social networks \cite{bianconi2014triadic,kossinets2009origins,klimek2013triadic,huang2018will}. Other opinion formation models (using instead a discretely evolving network) have also included updates to the network through triadic closure \cite{iniguez2009opinion}. Note that the $(w^2)_{ij}$ term in \eqref{eqn: FOAF weight dynamics} is scaled by a factor of $N^{-1}$ so that all terms are of the same order. 

\textbf{Memory weight dynamics:} The final example we consider may give the simplest model but has the most profound impact on the opinion formation process.
\begin{equation}
    f^+(w)_{ij} = (1-w_{ij}) \,,\quad f^-(w)_{ij} = w_{ij} \,.
\end{equation}
This gives
\begin{align}
    \label{eqn: memory weight dynamics}
    \frac{dw_{ij}}{dt} &= \phi\big(|x_i - x_j|\big)\,(1-w_{ij}) - \big(1 - \phi\big(|x_i - x_j|\big)\big)\,w_{ij} \nonumber \\
    &= \phi\big(|x_i - x_j|\big) - w_{ij} 
\end{align}
In the weight dynamics of \eqref{eqn: memory weight dynamics}, $w_{ij}$ simply moves towards $\phi\big(|x_i - x_j|\big)$, meaning that individuals $i$ and $j$ build and maintain a relationship through continually interacting. Moreover, \eqref{eqn: memory weight dynamics} can be obtained by differentiating 
\begin{equation} \label{eqn: weights as memory}
    w_{ij}(t) = e^{-t}\, w_{ij}(0) + \int_0^t e^{s-t}\,\phi\big(|x_i(s) - x_j(s)|\big) \,ds \,,
\end{equation}
hence $w_{ij}$ can be interpreted as giving a `memory' of how much individuals $i$ and $j$ have interacted in the past. More recent times are weighted more strongly, or equivalently more distant times become less significant. This mirrors a limited memory capacity for each individual, an idea also considered in \cite{stokes2022extremism} and \cite{mariano2020hybrid}. Here \eqref{eqn: weights as memory} can also be interpreted as pairs of individuals establishing a trusting relationship over time, if they continue to be in agreement. Conversely, individuals that continually disagree gradually lose confidence in each others' opinions. 

A variation on the memory weight dynamics is to include a limit on the strength of particular edges. This allows the dynamics to give individuals a `memory' while also preserving a network structure. Letting $m\in[0,1]^{N\times N}$ be a matrix of maximal edge weights define weight dynamics by
\begin{equation} \label{Eqn: adapted memory weight dynamics}
    \dfrac{dw_{ij}}{dt} = m_{ij} \phi\big(|x_i - x_j|\big) \,-\, w_{ij}.
\end{equation}
This limits which individuals can interact, perhaps due to spatial constraints. Similar adjustments can be made to the other examples considered above, although these are beyond the scope of this paper. 

\subsection{Case Study: Bounded confidence} \label{Section: BC}

\begin{figure}[ht!]
    \centering
    \includegraphics[width=0.8\linewidth, trim = {2.5cm 2.5cm 0.5cm 2.5cm}, clip]{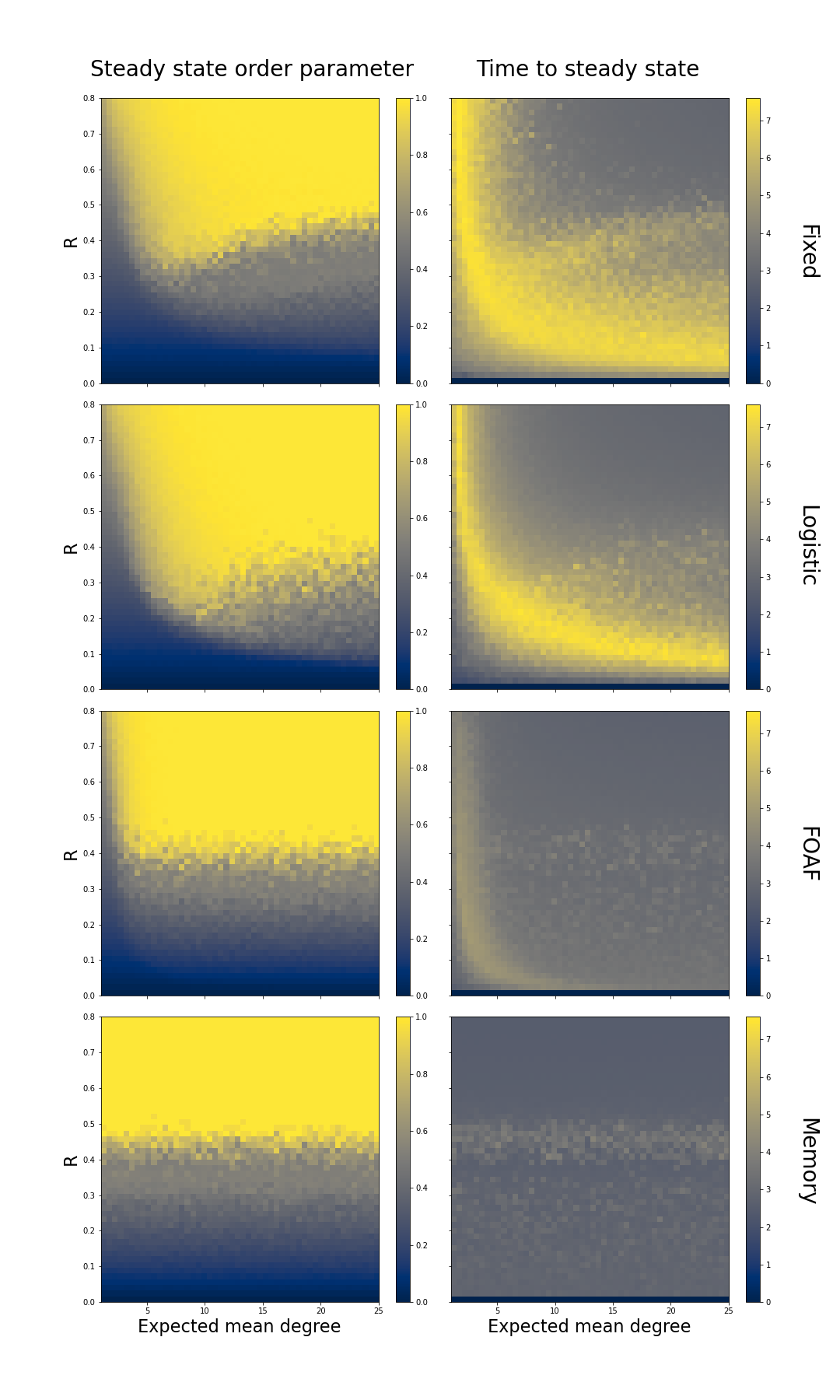}
    \caption{Heatmaps showing the order parameter at steady state (left column) and the time taken to reach that state (right column) for the dynamic network model \eqref{Eqn: general dynamic network system} with bounded confidence interaction function $\phi_R$ \eqref{Eqn: confidence indicator}. Each row of plots corresponds to a different type of weight dynamics. The expected mean degree of the initial network is varied along the x-axis of each heatmap and the bounded confidence radius $R$ is varied along the y-axis.}
    \label{fig:timed heatmaps, bounded confidence}
\end{figure}

To demonstrate the impact of dynamic weights we simulate the three examples given above, using two different interaction functions. The first set of results use the bounded confidence interaction function $\phi_R$ given in \eqref{Eqn: confidence indicator}. The confidence level $R$, as well as the mean degree of the initial network (denoted $\Bar{k}(0)$), are both varied. The dynamics are then simulated with fixed weights and each of the three example weight dynamics \eqref{eqn: logistic weight dynamics},\eqref{eqn: FOAF weight dynamics} and \eqref{eqn: memory weight dynamics}. For each set of simulations, the same initial opinions and initial network is used. A population size of $N = 500$ is used throughout. Initial opinions are chosen randomly from a uniform distribution on $[-1,1]$. Initial networks are Erd\H{o}s-R\'enyi random graphs with $N$ nodes and varying edge probabilities $p$. The expected value of $\Bar{k}(0)$ is given by $Np$. We have purposefully chosen small values of $\Bar{k}(0)$ so that the network plays a notable role in the opinion formation process. For each combination of $R$ value and $\Bar{k}(0)$, the dynamics are run for and results are averaged over 10 sets of initial conditions.

To quantitatively describe the outcome of our simulations we adapt an order parameter originally introduced for the HK model by Wang \textit{et al.} \cite{wang2017noisy}. Denoting by $M_\phi$ the maximum possible value of $\phi$, define the order parameter $Q$ as 
\begin{equation} \label{eqn: order parameter}
    Q = \frac{1}{N^2 M_{\phi}} \sum_{i,j=1}^N \phi\big(|x_i - x_j|\big) \,.
\end{equation}
The order parameter $Q$ takes values in the range $[0,1]$, with a value of $1$ indicating maximal interaction. For interaction functions that are decreasing, $Q$ also describes the extent of opinion clustering \cite{goddard2022noisy}. 

Simulations are run until a steady state is reached (defined as the difference in the population's opinions between consecutive timesteps of the discretisation being less than $10^{-5}$ in the Euclidean norm) or until a maximum time of $t=2000$. At this point, the order parameter is reported as well as the time required to reach this state. Results for the bounded confidence interaction function are shown in Fig.\ref{fig:timed heatmaps, bounded confidence}. Appendix \ref{Appendix: Difference heatmaps} also shows heatmaps of the difference between the results for the fixed network and each of the network dynamics. 

When using a fixed network we observe a transition from disagreement (appearing here as dark blue) in the bottom left corner to consensus (appearing here as yellow) in the top right corner. When $\Bar{k}(0)$ is relatively small ($<10$) its value has a significant effect on the opinion formation process. As the low mean degree restricts the potential for interaction, higher values of $R$ are needed in order to approach consensus. When $\Bar{k}(0)$ is relatively high ($>15$) it has less impact, as the population is already sufficiently well mixed. As is typical in bounded confidence models, higher values of $R$ lead to greater agreement \cite{hegselmann2002opinion,fortunato2005consensus}, although the introduction of the network appears to alter the location of the transition between clustering and consensus. It should be noted that the value of $R$  affects the calculation of the order parameter (as it is defined based on the interaction function) but $\Bar{k}(0)$ does not. 

The time taken to reach steady state is shown on the right hand side (on a log scale). It is clear that areas of the parameter space in which the behaviour of the model changes typically take much longer to reach their steady state. Consider for example a scenario in which the population is split between two distinct clusters, except for a small number of individuals that lie between the clusters. For small values of $R$ these clusters would remain separate and for large values of $R$ all individuals would interact and reach consensus. For intermediary values of $R$ we observe a situation in which the small group of individuals bridge the gap between the two clusters, slowly bringing them towards each other until they eventually interact and collapse into a single group. An example of this can be found in Appendix \ref{Appendix: Additional examples}. This behaviour is observed at the transition between polarisation and consensus and takes significantly longer to reach its steady state than either of the more extreme behaviours. 

The logistic weight dynamics \eqref{eqn: logistic weight dynamics} do not have a major effect on the opinion formation process, although there is slightly more agreement for high mean degrees. As logistic weight dynamics do not create any new edges, or remove any edges with weight $1$, it is perhaps unsurprising that their effect is minimal. A similar pattern is also seen in the time to steady state, with the logistic weight dynamics creating a slight speed-up in some areas. 

FOAF weight dynamics \eqref{eqn: FOAF weight dynamics} have a much greater impact. The effect of $\Bar{k}(0)$ is almost entirely removed as new edges are created. However, when $\Bar{k}(0)$ is extremely small it still has some effect, as the FOAF weight dynamics cannot create edges between different connected components of the network. The time to steady state is drastically reduced for all parameter values, although the areas of transition in the opinion formation process still take relatively longer. 

The memory weight dynamics \eqref{eqn: memory weight dynamics} completely eliminate the effect of the initial mean degree. The typical bounded confidence transition from disagreement to clustering to consensus is observed across all mean degree values. Similarly to the FOAF weight dynamics the time to steady state is significantly reduced. 

It is interesting to note that in both the fixed network and logistic dynamic network, increasing $\Bar{k}(0)$ does not always have a monotonic effect on agreement. In the region $R\in[0.3,0.4]$ we see that when $\Bar{k}(0)$ is extremely small ($\leq 5$) the final order parameter is very low, similarly it is low when $\Bar{k}(0)$ is large ($\geq 15$). However, in between these two areas there is a region of consensus. While the cause of this effect is not immediately apparent, it may be that this level of connectivity in the network is sufficient to facilitate interaction but not so great as to lead to rapid local clustering. This is a similar behaviour to that discussed in Example \ref{example: increasing w}. Notably, this effect is not seen in the FOAF and memory dynamics as many new edges are created and this `ideal' level of connectivity is lost. Indeed, in this parameter region, the level of agreement is lower under memory and FOAF weight dynamics, compared to the fixed and logistic dynamics models. The existence and location of this region may depend on the finite population size. 

This case study demonstrates the impact dynamic weights can have on opinion formation. In some cases we observe greater consensus, both in the overall level of agreement and the time taken to reach it. However, in certain areas the change to the mean degree of the network leads to increased local clustering instead of consensus. 

\subsection{Case Study: Exponential interaction function} \label{Section: exp}

The previous simulation setup is now repeated for a second interaction function. For $\alpha \geq 0$ we define
\begin{equation} \label{eqn: exp phi}
    \phi^\alpha(r) = e^{-\alpha r}. 
\end{equation}
This interaction function (with $\alpha = 6$) is one of the examples shown in Fig.\ref{fig:fixed network example interaction functions}. Here the parameter $\alpha$ is varied, as the parameter $R$ was for the bounded confidence interaction function. Note that for a fixed distance between opinions, higher values of $\alpha$ give lower interaction strengths. Although this interaction function is always strictly positive, unlike the bounded confidence interaction function, multiple interesting behaviours are still observed. 

Note that if the random network used in a particular simulation is connected, as in the example in Fig.\ref{fig:fixed network example interaction functions}, Proposition \ref{Prop: Fixed network convergence} guarantees consensus in the fixed network model with interaction function $\phi^\alpha$. Hence any deviation from convergence in the fixed network model can only be caused by the network containing multiple distinct connected components. 

We also show that for the memory weight dynamics consensus can be guaranteed for any initial network. In fact, we give a more general statement in Proposition \ref{Prop: general phi memory consensus}, which covers the case of $\phi^\alpha$. This is done by applying Theorem 2.3 from \cite{motsch2014heterophilious} which provides a sufficient condition for consensus. 

\begin{restatable}{proposition}{general phi, memory weights consensus} \label{Prop: general phi memory consensus}
    Let $\phi$ satisfy Assumption \ref{Assumption group: phi}. In addition, assume that there exists a constant $c > 0$ such that $\phi(r) > c$ for all $r\in[0,2]$. Then the solution to \eqref{Eqn: general dynamic network system} with weight dynamics given by \eqref{eqn: memory weight dynamics} reaches consensus for any $x(0)\in[-1,1]^N$ and $w(0)\in[0,1]^{N\times N}$. 
\end{restatable}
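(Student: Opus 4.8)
The plan is to exploit the hypothesis $\phi>c>0$ on all of $[0,2]$ to show that, under the memory dynamics \eqref{eqn: memory weight dynamics}, \emph{every} edge weight is driven up to a uniformly positive value after a finite time. This makes the influence network complete with a uniform positive lower bound on its weights, after which consensus follows from the sufficient condition of Theorem 2.3 in \cite{motsch2014heterophilious}. Crucially, this route avoids any appeal to symmetry or connectivity of $w(0)$, neither of which is assumed here (in contrast to Lemma \ref{Lemma: MotschTadmor} and Proposition \ref{Prop: Fixed network convergence}).

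First I would use the integral representation \eqref{eqn: weights as memory}. By Proposition \ref{Prop: Boundedness} the opinions remain in $[-1,1]^N$, so every distance $|x_i(s)-x_j(s)|$ stays in $[0,2]$ where $\phi\geq c$ holds. Since $w_{ij}(0)\geq 0$, this gives, for each $i\neq j$,
\[
    w_{ij}(t)\;\geq\; c\int_0^t e^{s-t}\,ds \;=\; c\,(1-e^{-t}),
\]
and together with $w_{ii}(t)=1$ this bounds all entries of $w(t)$ below by a quantity tending to $c$. Fixing any $\delta\in(0,c)$ and setting $t_0=\ln\!\big(c/(c-\delta)\big)$, it follows that $w_{ij}(t)\geq\delta$ for all $i,j$ and all $t\geq t_0$.

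Next I would rewrite the opinion equation \eqref{Eqn: general dynamic network system x_i} in the form $\dot{x}_i=\sum_{j\neq i}a_{ij}(t)(x_j-x_i)$ with effective influence weights $a_{ij}(t)=w_{ij}(t)\,\phi(|x_i-x_j|)/k_i(t)$. Using $w_{ij}(t)\geq\delta$, $\phi\geq c$, and $k_i(t)=\sum_j w_{ij}(t)\leq N$, each weight satisfies $a_{ij}(t)\geq \delta c/N=:\beta>0$ for all $i\neq j$ and $t\geq t_0$. Thus from time $t_0$ onward the opinions evolve by a consensus dynamics on the complete graph whose weights are uniformly bounded below by the single positive constant $\beta$. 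Restarting from $t_0$ with initial data $x(t_0)\in[-1,1]^N$, I would then invoke Theorem 2.3 of \cite{motsch2014heterophilious}, whose hypothesis amounts to exactly such a uniform positive lower bound, to conclude $D(t)\to0$, i.e. consensus. Concretely, if $M$ and $m$ index the maximal and minimal opinions, then $\dot{x}_M\leq a_{Mm}(x_m-x_M)\leq-\beta\,D(t)$ and $\dot{x}_m\geq\beta\,D(t)$, so the upper Dini derivative of $D$ satisfies $\tfrac{d}{dt}D(t)\leq-2\beta\,D(t)$ for $t\geq t_0$, forcing exponential decay.

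I expect the main obstacle to be verifying that the hypotheses of Theorem 2.3 are genuinely satisfied by these weights: they are both time-dependent and, because of the normalisation $1/k_i(t)$, non-symmetric even when $w(0)$ is symmetric. The crux is therefore to confirm that the Motsch--Tadmor consensus criterion applies in this non-symmetric, time-varying regime, rather than only in the symmetric setting underlying Lemma \ref{Lemma: MotschTadmor}, once the uniform lower bound $\beta$ has been established. The lower-bound derivation in the earlier steps is otherwise routine.
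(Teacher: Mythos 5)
Your proposal is correct and follows essentially the same route as the paper: both establish a uniform positive lower bound on every off-diagonal weight after a finite time (the paper via Gronwall applied to $c - w_{ij}$, you via the integral representation \eqref{eqn: weights as memory} — these give equivalent bounds), and then both restart the system and apply Theorem 2.3 of \cite{motsch2014heterophilious} to the row-normalised coefficients $a_{ij}$, which is indeed formulated for exactly this non-symmetric, time-varying setting. Your supplementary Dini-derivative estimate $\tfrac{d}{dt}D(t)\leq -2\beta\,D(t)$ is a valid self-contained way to discharge the concern you raise about the applicability of that theorem, but it does not constitute a different overall strategy.
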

\begin{proof}
Consider the distance between the minimum value of $\phi(r)$ and each current edge weight:
\begin{equation*}
    \frac{d}{dt}(c - w_{ij}) = w_{ij} - \phi\big(|x_i - x_j|\big) \leq - (c - w_{ij}) \,.
\end{equation*}
Hence by Gronwall's inequality $c - w_{ij} \leq e^{-t}\,\big(c - w_{ij}(0)\big)$ or equivalently $w_{ij} \geq \beta(t)$, where
\begin{equation*}
    \beta(t) := c - e^{-t}\,\big(c - w_{ij}(0) \big) \,. 
\end{equation*}
If $w_{ij}(0) > c$ then $\beta(t)\geq c \geq c/2$ for all $t\geq 0$. Conversely, if $w_{ij}(0) \leq c$ then $\beta(t)$ is an increasing function of time and so $\beta(\log(2)) \geq c/2$ implies that $\beta(t) \geq c/2$ for all $t\geq \log(2)$. This guarantees that for all $i,j\in\{1,\dots,N\}$, $w_{ij}(t) \geq c/2$ for all $t \geq \log(2)$. 

Define
\begin{align*}
    a_{ij} &:= \frac{1}{k_i} w_{ij}\,\phi\big(|x_i - x_j|\big) \geq \frac{c^2}{2N} \,, \quad \text{for } j\neq i \\
    a_{ii} &:= 1 - \sum_{j\neq i} a_{ij} \geq \frac{1}{N} \geq \frac{c^2}{2N} \,,
\end{align*}
with the inequalities holding for (at least) all times $t \geq \log(2)$. Hence after this time all individuals are interacting. Considering then the system restarted at time $t = \log(2)$, Theorem 2.3 from \cite{motsch2014heterophilious} can be applied to guarantee consensus. 
\end{proof}

Proposition \ref{Prop: general phi memory consensus} holds for functions $\phi^\alpha$ \eqref{eqn: exp phi} with $c = e^{-2\alpha}$, therefore the case of memory weight dynamics has only one possible outcome: consensus. However, other examples of weight dynamics are not so straightforward. In Section \ref{Section: BC} the use of an indicator function as the interaction function meant that the derivative of $w_{ij}$ was given either entirely by $f^+(w)_{ij}$ or entirely by $f^-(w)_{ij}$. However, as $\phi^\alpha$ takes values outside of $\{0,1\}$, the derivative of $w_{ij}$ is a balance of both $f^+(w)_{ij}$ and $f^-(w)_{ij}$. This means it is possible for edge weights to decrease even though $\phi^\alpha$ is strictly positive. The contextual interpretation of this is that a weak interaction is not enough to maintain a relationship. For example, logistic weight dynamics have the form
\begin{equation*}
    \frac{dw_{ij}}{dt} = w_{ij}\,(1-w_{ij})\,(2\phi^\alpha \big(|x_i - x_j|\big) - 1 ) \,,
\end{equation*}
so $w_{ij}$ will be decreasing if $\phi^\alpha \big(|x_i - x_j|\big) < \frac{1}{2}$, or equivalently if $|x_i - x_j| > \frac{1}{\alpha} \log(2)$. Hence it is still possible for edge weights to decay and to observe outcomes other than consensus.  

\begin{figure}[ht!]
    \centering
    \includegraphics[width=0.8\linewidth, trim = {2.5cm 1cm 0.5cm 1.5cm}, clip]{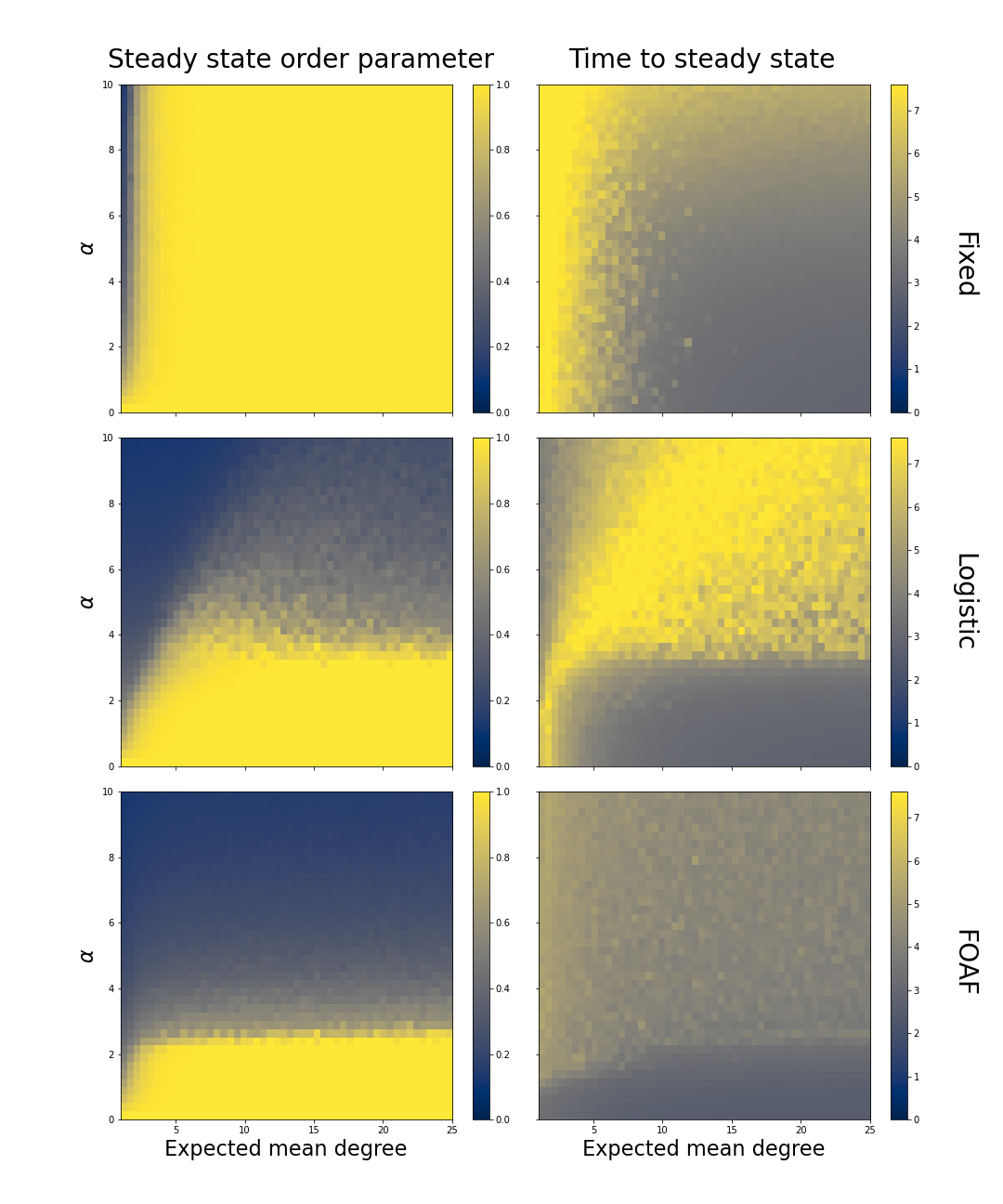}
    \caption{Heatmaps showing the order parameter at steady state (left column) and the time taken to reach that state (right column) for the dynamic network model \eqref{Eqn: general dynamic network system} with exponential interaction function $\phi^\alpha$ \eqref{eqn: exp phi}. Each row of plots corresponds to a different type of weight dynamics. The expected mean degree of the initial network is varied along the x-axis of each heatmap and the exponential decay rate $\alpha$ is varied along the y-axis.}
    \label{fig:timed heatmaps, exponential}
\end{figure}

Fig.\ref{fig:timed heatmaps, exponential} shows the outcome of the opinion formation process under each of: a fixed network, logistic weight dynamics, and FOAF weight dynamics. Recall that Proposition \ref{Prop: general phi memory consensus} guarantees consensus under memory weight dynamics. As previously Appendix \ref{Appendix: Difference heatmaps} shows heatmaps of the difference between the results for the fixed network and each of the network dynamics. Again note that $\alpha$ (but not the mean degree) affects the calculation of the order parameter. 

As expected the fixed network model \eqref{Eqn: fixed network ODEs} almost always reaches consensus, except in cases where $\Bar{k}(0)$ is extremely small and the fixed network has multiple connected components. This effect is in fact observed in all three cases. A small initial mean degree severely limits interactions, making reaching consensus more difficult. When new edges cannot be created between different connected components of $w$ consensus is often impossible if $w(0)$ is not connected. 

For logistic weight dynamics \eqref{eqn: logistic weight dynamics} we observe a transition from consensus at small values of $\alpha$, to clustering for larger values. For all values of $\alpha$, increasing the mean degree brings the population closer to consensus. 

As before, the time to steady state is much greater in areas of behavioural transition, with multiple parameter values reaching the maximum time (set here to $t = 3000$). For FOAF weight dynamics the time to steady state is much smaller and there is a clearer transition between consensus for small values of $\alpha$ and clustering for larger values of $\alpha$. As new edges can be created the initial mean degree has little impact. 

In this case study we observe logistic and FOAF weight dynamics both reducing the final level of agreement for many parameter values, demonstrating that weight dynamics can prevent as well as create consensus. 

\section{Extreme timescales} \label{Section: Timescales}

In this section we explore the impact of the relative timescales of the weight dynamics and opinion formation process, as done in \cite{barre2021fast}. This is done by considering two extreme scenarios: where the weight dynamics are significantly slower than the opinion formation process and where the weight dynamics are significantly faster. In both cases we investigate the effect of scaling the derivative of $w$, firstly by $\varepsilon > 0$ and secondly by $\varepsilon^{-1}$, then taking the limit $\varepsilon \rightarrow 0$. 

As this section compares the behaviour of the dynamic weight model to other models, we restrict our consideration to interaction functions $\phi$ that are Lipschitz continuous. In cases where $\phi$ is discontinuous, slight differences between opinion states can rapidly lead to completely different dynamics, preventing the type of comparison we wish to investigate. Hence we add the following assumption:
\begin{assumption} \label{assumption: phi Lipschitz} 
The interaction function $\phi$ is Lipschitz continuous with Lipschitz constant $L$.
\end{assumption}

Our first scenario of interest is when the weight dynamics are significantly slower than the opinion formation process. Intuitively, as the speed of the weight dynamics approaches zero the derivative of $w_{ij}$ also approaches zero, meaning the dynamics are similar to the fixed network model. 

\begin{theorem} \label{Theorem: slow weight dynamics}
    Let $\phi$ satisfy Assumption \ref{Assumption group: phi} and Assumption \ref{assumption: phi Lipschitz}. Let $f^+$ and $f^-$ satisfy Assumption \ref{Assumption group: f}. For $\varepsilon>0$ consider the system given by scaling the weight dynamics in \eqref{Eqn: general dynamic network system} by a factor of $\varepsilon$.
    \begin{subequations} \label{Eqn: slow network dynamics system}
    \begin{align}
        \frac{dx_i}{dt} &= \frac{1}{k_i(t)} \sum_{j=1}^N w_{ij}\,\phi\big(|x_i - x_j|\big)\,(x_j - x_i) \,, \\
        \frac{dw_{ij}}{dt} &= \varepsilon \Big( \phi\big(|x_i - x_j|\big)\, f^+_{ij}(w_{ij}) + \big(1 - \phi\big(|x_i - x_j|\big)\big)\, f^-_{ij}(w_{ij}) \Big).
    \end{align}
    \end{subequations}
    with initial conditions $x_i(0) = x_i^0\in[-1,1]$ and $w_{ij}(0) = w_{ij}^0\in[0,1]$. Define a second system in which weights are constant 
    \begin{align}
        \frac{dX_i}{dt} 
        &= \frac{1}{k_i^0} \sum_{j=1}^N w_{ij}^0\,\phi\big(|X_i - X_j|\big)\,(X_j - X_i) \,,
    \end{align}
    with matching initial condition $ X_i(0) = x_i^0 $. 
    Then there exist constants $C,\kappa>0$, independent of $t$ and $\varepsilon$, such that for all $t\geq0$, 
    \begin{equation}
        \sum_{i=1}^N | X_i(t) - x_i(t) | \leq \varepsilon \,C t^2 e^{\kappa t}
    \end{equation}
\end{theorem}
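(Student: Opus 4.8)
The plan is to first show that the $\varepsilon$-scaled weight dynamics keep $w(t)$ within $O(\varepsilon t)$ of its initial value $w^0$, and then to run a Gronwall argument on the opinion error $E(t) := \sum_{i=1}^N |x_i(t) - X_i(t)|$, in which the weight drift enters as an inhomogeneous forcing term of size proportional to $\varepsilon t$. The $t^2 e^{\kappa t}$ in the conclusion should come out of feeding this linear-in-$t$ forcing through the integrating-factor form of Gronwall's inequality.

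For the first step, note that $f^+$ and $f^-$ are continuous on the compact set $[0,1]^{N\times N}$ and hence bounded, say by a constant $M$. Since $\phi\in[0,1]$, the right-hand side of the $w_{ij}$ equation in \eqref{Eqn: slow network dynamics system} is bounded in absolute value by $\varepsilon M$. Integrating in time gives $|w_{ij}(t) - w_{ij}^0| \leq \varepsilon M t$ for all $i,j$ and $t\geq0$, and therefore $|k_i(t) - k_i^0| \leq \sum_j |w_{ij}(t)-w_{ij}^0| \leq N\varepsilon M t$. I would also recall from the remark following \eqref{Eqn: general dynamic network system} that $k_i(t), k_i^0 \geq w_{ii} = 1$, so that every division by a degree is controlled.

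For the second step, write $g_{ij}(y) := \phi(|y_i - y_j|)(y_j - y_i)$ for the flux associated to an opinion vector $y$. By the maximum/minimum argument of Proposition \ref{Prop: Boundedness (fixed network)} (applied to both systems), $x(t)$ and $X(t)$ remain in $[-1,1]^N$, so $|g_{ij}|\leq 2$; moreover, using Assumption \ref{assumption: phi Lipschitz} together with $\big||y_i-y_j|-|z_i-z_j|\big|\leq|y_i-z_i|+|y_j-z_j|$, one checks that $g$ is Lipschitz, namely $|g_{ij}(x)-g_{ij}(X)| \leq (2L+1)\big(|x_i - X_i| + |x_j - X_j|\big)$. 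I would then decompose the difference of the two opinion velocities as
\begin{align*}
    \dot x_i - \dot X_i
    &= \frac{1}{k_i}\sum_j w_{ij}\big[g_{ij}(x) - g_{ij}(X)\big]
    + \frac{1}{k_i}\sum_j (w_{ij} - w_{ij}^0)\,g_{ij}(X) \\
    &\quad + \Big(\frac{1}{k_i} - \frac{1}{k_i^0}\Big)\sum_j w_{ij}^0\, g_{ij}(X).
\end{align*}
The first term is the genuinely contractive part: summing over $i$ and using $\sum_j w_{ij}/k_i = 1$ together with $w_{ij}\leq 1$ and $k_i\geq1$, it is bounded by $\kappa E(t)$ for a constant $\kappa$ depending only on $L$ and $N$. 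The second and third terms are the weight- and degree-drift terms; using $|w_{ij}-w_{ij}^0|\leq\varepsilon M t$, $|1/k_i - 1/k_i^0| \leq |k_i - k_i^0| \leq N\varepsilon M t$, and $|g_{ij}(X)|\leq 2$, each is bounded by a constant multiple of $\varepsilon t$ after summing over $i$. Combining these gives the differential inequality $\frac{d}{dt}E(t) \leq \kappa E(t) + C'\varepsilon t$ with $\kappa, C'$ independent of $t$ and $\varepsilon$ (valid for a.e.\ $t$ since each $|x_i - X_i|$ is Lipschitz in $t$), and since $E(0)=0$ Gronwall's inequality yields $E(t) \leq C'\varepsilon\int_0^t e^{\kappa(t-s)} s\,ds \leq C'\varepsilon\,e^{\kappa t}\,t^2/2$, which is the claimed bound with $C = C'/2$.

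I expect the main obstacle to be the clean term-by-term decomposition in the second step, in particular controlling the difference of the $1/k_i$ prefactors and confirming that the drift terms scale like $\varepsilon t$ rather than faster in $t$: it is precisely the linear-in-$t$ growth of the weight error that, once integrated against the Gronwall kernel, produces the quadratic $t^2$ factor. The other point requiring care is exploiting the normalisation $\sum_j w_{ij}/k_i = 1$ so that the contractive term is bounded by $E(t)$ itself (rather than by $N E(t)$ or worse), which keeps $\kappa$ finite and independent of $\varepsilon$.
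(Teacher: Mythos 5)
Your proposal is correct and follows essentially the same route as the paper's proof: bound the weight drift by $O(\varepsilon M t)$ using boundedness of $f^\pm$, decompose the difference of velocities into a Lipschitz/contractive part plus weight- and degree-drift terms of size $O(\varepsilon t)$, and close with Gronwall's inequality, the linear-in-$t$ forcing producing the $t^2$ factor. The only (immaterial) difference is that you work with the differential inequality $\frac{d}{dt}E \leq \kappa E + C'\varepsilon t$ while the paper applies Gronwall to the integrated form $u(t) \leq \kappa\int_0^t u(s)\,ds + C\varepsilon t^2$, with the coefficient difference bounded jointly rather than split into separate $w$- and $k$-drift pieces.
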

\begin{proof}
    As $f^+$ and $f^-$ are both continuous they are also bounded. Let $M>0$ be a constant such that $|f^+(w)_{ij}|<M$ and $|f^-(w)_{ij}|<M$ for all $w\in[0,1]^{N\times N}$ and $i,j\in\{1,\dots,N\}$. As $\phi\big(|x_i - x_j|\big)\in[0,1]$ this gives that $|w_{ij}(s) - w_{ij}^0| \leq 2Ms\varepsilon$ for all $s\geq 0$. 
    
    Let $t\geq 0$. Define $z_i(t) =  X_i(t) - x_i(t) $ and $u(t) := \sum_{i=1}^N | z_i(t) | $. Then we have
    \begin{align*}
        u(t) 
        &:= \sum_{i=1}^N | z_i(t) | \\
        &= \sum_{i=1}^N \bigg| \int_0^t \sum_{j=1}^N \bigg( \frac{w_{ij}^0}{k_i^0} \,\phi\big(|X_j(s) - X_i(s)|\big) \,(X_j(s) - X_i(s))  -  \frac{w_{ij}(s)}{k_i(s)} \,\phi\big(|x_j(s) - x_i(s)|\big) \,(x_j(s) - x_i(s)) \bigg) \,ds\, \bigg| \\
        &\leq \sum_{i=1}^N \bigg| \int_0^t \sum_{j=1}^N \frac{w_{ij}^0}{k_i^0} \,\phi\big(|X_j(s) - X_i(s)|\big)\,(z_j(s) - z_i(s)) \,ds\, \bigg| \\
        &\quad\,\,+
        \sum_{i=1}^N \bigg| \int_0^t \sum_{j=1}^N
        \bigg( \frac{w_{ij}^0}{k_i^0}\,\phi\big(|X_j(s) - X_i(s)|\big) - \frac{w_{ij}(s)}{k_i(s)}\,\phi\big(|x_j(s) - x_i(s)|\big) \bigg)\,(x_j(s) - x_i(s))\,ds\, \bigg| \\
        &\leq \sum_{i=1}^N \sum_{j=1}^N \int_0^t \bigg|   \frac{w_{ij}^0}{k_i^0} \,\phi\big(|X_j(s) - X_i(s)|\big)\,\bigg| \big |z_j(s) - z_i(s)\big|\,ds\,\\
        &\quad\,\,+
        \sum_{i=1}^N \sum_{j=1}^N \int_0^t \bigg| 
        \frac{w_{ij}^0}{k_i^0}\,\phi\big(|X_j(s) - X_i(s)|\big) - \frac{w_{ij}(s)}{k_i(s)}\,\phi\big(|x_j(s) - x_i(s)|\big) \bigg|\, \big| x_j(s) - x_i(s) \big| \,ds.\,
    \end{align*}
    The first of these terms is relatively straightforward to simplify as both $w$ and $\phi$ are bounded above by $1$ and $k_i$ is bounded below by $1$. The second term is more complex, so we first perform the following calculation separately 
    \begin{align*}
         \bigg| \frac{w_{ij}^0}{k_i^0}\,\phi\big(|X_j(s) -& X_i(s)|\big) - \frac{w_{ij}(s)}{k_i(s)}\,\phi\big(|x_j(s) - x_i(s)|\big) \bigg| \\[0.2em]
        \leq& \bigg| \frac{w_{ij}^0}{k_i^0}\bigg( \phi\big(|X_j(s) - X_i(s)|\big) - \phi\big(|x_j(s) - x_i(s)|\big)\bigg) \bigg|  + \bigg| \bigg( \frac{w_{ij}^0}{k_i^0} - \frac{w_{ij}(s)}{k_i(s)}\bigg) \phi\big(|x_j(s) - x_i(s)|\big) \bigg| \\[0.4em]
        \leq& \bigg| \phi\big(|X_j(s) - X_i(s)|\big) - \phi\big(|x_j(s) - x_i(s)|\big) \bigg|  + \bigg| \frac{w_{ij}^0}{k_i^0} - \frac{w_{ij}(s)}{k_i(s)} \bigg| \\[0.4em]
        \leq& L\,\bigg| |X_j(s) - X_i(s)| - |x_j(s) - x_i(s)| \bigg|  + \bigg| \frac{w_{ij}^0\, | k_i(s) - k_i^0| +  k_i^0\,| w_{ij}^0 - w_{ij}(s)| }{k_i^0\,k_i(s)} \bigg| \\[0.4em]
        \leq& L\,\big| X_j(s) - X_i(s) - x_j(s) + x_i(s) \big|  + \frac{w_{ij}^0\, 2NMs\varepsilon +  k_i^0\,2Ms\varepsilon }{k_i^0\,k_i(s)} \\
        \leq& L\,\big| z_j(s) - z_i(s) \big|  + 4 N M\varepsilon s   
    \end{align*}
    Overall this gives
    \begin{align*}
        u(t) 
        &\leq \sum_{i=1}^N \sum_{j=1}^N \int_0^t \big |z_j(s) - z_i(s)\big|\,ds\,+
        2\sum_{i=1}^N \sum_{j=1}^N \int_0^t L \big |z_j(s) - z_i(s)\big| + 4 N M\varepsilon s  \, \,ds\, \\
        &\leq (1 + 2L) \sum_{i=1}^N \sum_{j=1}^N \int_0^t \,\big| z_j(s) \big| + \big|z_i(s) \big| \,ds\,+
        2\sum_{i=1}^N \sum_{j=1}^N \int_0^t 4 N M\varepsilon s  \, \,ds\, \\
        &\leq 2N(1 + 2L) \int_0^t u(s) \,ds\,+ 4N^3 M\varepsilon t^2.
    \end{align*}
    Letting $C = 4N^3 M$ and $\kappa = 2N(1+2L)$, Gronwall's inequality then gives the required result. 
\end{proof}

Hence the dynamic network model with slowed weight dynamics approximates the fixed network model arbitrarily well (for $\varepsilon$ sufficiently small). This confirms the intuition that if the weight dynamics are extremely slow the behaviour approaches that of the fixed weights model. 

\begin{remark}
    One approach to handling a discontinuous interaction function $\phi$ would be to mollify it, giving a continuous approximation that satisfies Assumption \ref{assumption: phi Lipschitz}. This would allow an application of the theorem above. However, as the mollified interaction function converges back to the original $\phi$, the Lipschitz constant of the mollified interaction function $(L)$ would necessarily diverge. Hence a balance would be needed between the accuracy of the mollification and the necessary size of $\varepsilon$. 
\end{remark}

We now discuss the scenario in which the weight dynamics are significantly faster than the opinion formation process. This scenario is more complex and there are significant restrictions on the types of weight dynamics for which a limit can be established. If the weight dynamics are relatively much faster than the opinion formation process, then the weights will rapidly approach their steady state during a time-frame in which individuals' opinions will have changed relatively little. In the limit as $\varepsilon\rightarrow0$ the weights can in fact be replaced by this steady state. The key issue is that this steady state must not depend on the initial weights, ensuring the behaviour of the weight dynamics is completely predictable. Of the three example weight dynamics we have considered only the memory weight dynamics has this property, hence for clarity we present Theorem \ref{theorem: fast memory weights} specifically for memory weight dynamics. Under additional assumptions the same approach could be applied to prove a similar result for other weight dynamics. 

\begin{theorem} \label{theorem: fast memory weights}
Let $\phi$ satisfy Assumption \ref{Assumption group: phi} and Assumption \ref{assumption: phi Lipschitz}. Furthermore assume $\phi(0)>0$. Let $X^0\in[-1,1]^N$. Define a new set of dynamics for $X =(X_1,\dots,X_N)$ by,  
\begin{subequations} \label{eqn: fast memory dynamics limiting model}
\begin{align}
    \frac{dX_i}{dt} &= \frac{1}{K_i} \sum_{j\neq i} \,\phi\big(|X_i - X_j|\big)^2\,(X_j - X_i),\quad X_i(0) = X_i^0 \\
    K_i &= \sum_{j=1}^N \phi\big(|X_i - X_j|\big). 
\end{align}
\end{subequations}
For $\varepsilon > 0$, consider the dynamic network model given by
\begin{subequations}
\begin{align}
    \frac{dx_i}{dt} &= \frac{1}{k_i} \sum_{j\neq i} w_{ij}\,\phi\big(|x_i - x_j|\big)\,(x_j - x_i) \,, \\ 
    \frac{dw_{ij}}{dt} &= \frac{1}{\varepsilon}\Big( \phi\big(|x_i - x_j|\big)\, 
 - w_{ij} \Big) \,, \label{eqn: dwdt memory fast}
 \end{align}
\end{subequations} 
with $x_i(0) = X_i^0$ and $w_{ij}(0) = \phi\big(|X_i^0 - X_j^0 |\big)$. Then there exist constants $K,c > 0$, independent of $t$ and $\epsilon$, such that 
\begin{equation} \label{eqn: fast memory dynamics result}
        \|x(t) - X(t)\|^2 \leq \varepsilon^2 c e^{Kt} \,.
\end{equation}
where $\|\cdot\|$ is the standard Eulcidean norm in $\mathds{R}^N$.
\end{theorem}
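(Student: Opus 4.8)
The plan is to exploit the singular-perturbation structure of the system: when $\varepsilon$ is small, the weight equation \eqref{eqn: dwdt memory fast} relaxes $w_{ij}$ towards its quasi-steady-state value $\phi\big(|x_i - x_j|\big)$ on a timescale of order $\varepsilon$, and substituting this steady state into the opinion equation reproduces \emph{exactly} the closed system \eqref{eqn: fast memory dynamics limiting model}. Accordingly, I would introduce the weight error $e_{ij}(t) := w_{ij}(t) - \phi\big(|x_i(t) - x_j(t)|\big)$ and the opinion error $z(t) := x(t) - X(t)$, and control each in turn. Note that the choice of initial weights gives $e_{ij}(0) = 0$, and that the diagonal choice $w_{ii}(0) = \phi(0)$ (with $\dot w_{ii} = 0$) makes the denominator $k_i$ match $K_i$ in the limit.

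The first step is to bound $e_{ij}$ uniformly in time by $O(\varepsilon)$. Differentiating and using \eqref{eqn: dwdt memory fast} gives
\[
    \dot e_{ij} = -\tfrac{1}{\varepsilon} e_{ij} - \tfrac{d}{dt}\phi\big(|x_i - x_j|\big).
\]
By Proposition \ref{Prop: Boundedness}, opinions remain in $[-1,1]$ and weights in $[0,1]$, so $|\dot x_i| \le 2$; combined with the Lipschitz Assumption \ref{assumption: phi Lipschitz} this yields $\big|\tfrac{d}{dt}\phi(|x_i - x_j|)\big| \le 4L$ almost everywhere. Treating this as a bounded forcing and integrating the linear equation (variation of constants) with $e_{ij}(0)=0$ gives $|e_{ij}(t)| \le 4L\varepsilon\,(1 - e^{-t/\varepsilon}) \le 4L\varepsilon$ for all $t \ge 0$. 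This uniform-in-time estimate is the crux of the argument and the main obstacle, since it is what converts the fast relaxation into a genuinely small perturbation of the slow dynamics; the only subtlety is that $|x_i - x_j|$ is merely Lipschitz (not $C^1$) in $t$, but the composite $\phi\big(|x_i-x_j|\big)$ is Lipschitz, so the derivative bound holds a.e. and the integral identity remains valid.

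The second step is to rewrite the opinion equation as a closed autonomous vector field plus a small remainder. Writing $w_{ij} = \phi\big(|x_i-x_j|\big) + e_{ij}$ and $k_i = \sum_j \phi\big(|x_i-x_j|\big) + \sum_j e_{ij}$, and using that $\sum_j \phi\big(|x_i-x_j|\big) \ge \phi(0) > 0$ keeps every denominator bounded below, I would expand to obtain $\dot x_i = F_i(x) + R_i$, where
\[
    F_i(x) = \frac{1}{\sum_{j}\phi(|x_i-x_j|)}\sum_{j\neq i}\phi\big(|x_i-x_j|\big)^2\,(x_j - x_i)
\]
is precisely the right-hand side of \eqref{eqn: fast memory dynamics limiting model}, and the remainder satisfies $\|R(t)\| \le C\varepsilon$ uniformly in $t$ (using $\phi \le 1$, $|x_j - x_i| \le 2$, the $e_{ij}$ bound, and $\phi(0)>0$ to control the reciprocal of $k_i$; this requires $\varepsilon$ small enough that $\sum_j e_{ij}$ does not overwhelm the lower bound on $k_i$, which is the regime of interest).

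The final step is a Gronwall comparison. Since $X$ solves $\dot X_i = F_i(X)$ while $x$ solves $\dot x_i = F_i(x) + R_i$ with $z(0) = 0$, and since $F$ is Lipschitz on the invariant set $[-1,1]^N$ with some constant $\Lambda$ (products and quotients of bounded Lipschitz functions, with denominators bounded below by $\phi(0)$), I would estimate
\[
    \tfrac{d}{dt}\|z\|^2 = 2\langle z,\, F(x) - F(X) + R\rangle \le (2\Lambda + 1)\|z\|^2 + \|R\|^2 \le (2\Lambda+1)\|z\|^2 + C^2\varepsilon^2.
\]
Gronwall's inequality then delivers $\|z(t)\|^2 \le \varepsilon^2 c\, e^{Kt}$ with $K = 2\Lambda + 1$ and $c$ depending on $C$, which is exactly \eqref{eqn: fast memory dynamics result}. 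The only remaining detail is the verification of the Lipschitz constant $\Lambda$ for $F$, which follows from compactness of $[-1,1]^N$ together with $\phi(0) > 0$.
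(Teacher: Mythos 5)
Your argument is correct, but it is a genuinely different route from the paper's: the paper disposes of this theorem in two lines by invoking Theorem 15.2 of Pavliotis and Stuart's multiscale methods text (the general averaging result for fast--slow ODE systems whose fast variable relaxes to a unique, exponentially attracting quasi-steady state), checking only that continuity of $\phi$ and linearity of the memory dynamics in $w$ satisfy its hypotheses. You instead give a self-contained singular-perturbation proof: the variation-of-constants bound $|e_{ij}(t)| = |w_{ij}(t) - \phi(|x_i - x_j|)| \leq 4L\varepsilon$ (valid uniformly in $t$ precisely because the well-prepared initial data $w_{ij}(0) = \phi(|X_i^0 - X_j^0|)$ puts the system on the slow manifold, so there is no initial layer), followed by a decomposition $\dot{x} = F(x) + R$ with $\|R\| = O(\varepsilon)$ and a Gronwall comparison. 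What your approach buys is explicitness --- readable constants, a transparent role for the choice of initial weights, and correct handling of the a.e.\ differentiability of $t \mapsto \phi(|x_i - x_j|)$ --- at the cost of length; the paper's citation buys brevity but leaves the verification of the cited theorem's hypotheses implicit. Two small remarks: your worry that $\varepsilon$ must be small enough to keep $k_i$ bounded below is unnecessary, since $k_i \geq w_{ii} = \phi(0) > 0$ holds unconditionally once the diagonal convention is fixed; and your observation that the diagonal must be $w_{ii} = \phi(0)$ rather than the general model's $w_{ii} = 1$ for the denominators $k_i$ and $K_i$ to match in the limit is a genuine subtlety of the theorem statement that the paper's proof does not surface.
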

\begin{proof}
    This statement is a direct application of Theorem 15.2 from \cite{pavliotis2008multiscale}. The continuity of $\phi$ and linearity in $w$ of the memory weight dynamics are sufficient to ensure all conditions of Theorem 15.2 are met. 
\end{proof}

Hence in this case a `limiting model' exists as $\varepsilon\rightarrow0$, given by \eqref{eqn: fast memory dynamics limiting model}. Note that the interaction function appearing here is $\phi^2$, but the normalisation is by the sum of $\phi$'s. The additional assumption that $\phi(0) > 0$ ensures this normalisation is well-defined. The inclusion of $\phi$ in the normalisation makes this model distinct from any others considered in this paper. Furthermore, the appearance of both $\phi$ and $\phi^2$ makes this different from any standard opinion formation models. 

Appendix \ref{Appendix: Simulation of extreme timescales} shows an example numerical simulation of the extreme timescales described in Theorem \ref{Theorem: slow weight dynamics} and Theorem \ref{theorem: fast memory weights}. 

\section{Discussion} \label{Section: Discussion}

We have examined the behaviour of both the fixed and dynamic network models analytically in Sections \ref{Section: Fixed network} and \ref{Section: Dynamic network}, numerically through extensive simulation of two case studies in Sections \ref{Section: BC} and \ref{Section: exp}, and at extreme timescales in Section \ref{Section: Timescales}. Firstly we see that the behaviour of the dynamic network model can resemble that of the fixed network model, especially if the weight dynamics are slow. Many typical features of opinion dynamics models, such as a transition from consensus to polarisation, are observed in Fig.\ref{fig:timed heatmaps, bounded confidence} and Fig.\ref{fig:timed heatmaps, exponential}. However, we have also shown that weight dynamics can significantly alter the outcome of the opinion formation process. For example, Proposition \ref{Prop: general phi memory consensus} shows that memory weight dynamics in particular can guarantee consensus for a broad class of interaction functions. In addition, the bounded confidence case study in Section \ref{Section: BC} demonstrates that FOAF weight dynamics, which provide a more realistic mechanism for introducing new edges, can also help create consensus. On the other hand, the second case study in Section \ref{Section: exp} shows that these same weight dynamics can also drastically reduce the emergence of consensus by removing edges between individuals that interact only `weakly'. It is clear that different combinations of weight dynamics and interaction functions have varied and profound effects on opinion formation. 

Of course there are many other examples of weight dynamics that could have been tested and may produce interesting results. The question of which weight dynamics are most realistic, or which fit best to data, is of great interest. Furthermore, as mentioned in the description of \eqref{Eqn: adapted memory weight dynamics} it is also possible to modify the weight dynamics functions to impose a maximal edge weight for each pair of individuals. This enforces additional network structure that may have an interesting impact on examples such as memory weight dynamics. 

While our focus in this paper has primarily been on the impact of weight dynamics on opinion formation, there is of course an impact on the structure of the network itself. In the case of logistic weight dynamics, no new edges are introduced and the network can only become more sparse over time. As discussed, FOAF weight dynamics allows both for new edges to form and existing edges to decay, hence changing the overall structure of the network. One direction for future work is to more closely examine how the structure of the network changes, for example how the degree distribution evolves and the extent to which community structure emerges. Another significant development would be to change the population size during the dynamics. In this case a mechanism, such as preferential attachment \cite{barabasi1999emergence}, would be needed to establish a new individual's initial connections. The choice of this mechanism would itself have a significant impact on the network structure, and therefore on the opinion formation process. Considering a change in the population size also raises questions about the existence of a mean-field or large population limit, or indeed how this would be interpreted in the context of a network. The graph limit discussed in \cite{ayi2021mean} could provide one possible approach. 

In some cases, such as memory weight dynamics, the original network structure is entirely overridden. Theorem \ref{theorem: fast memory weights} shows that this can lead to the emergence of new models if this process happens extremely quickly. Examining the behaviour of these models is another direction for future research, as is an exploration of how the dynamic weights model behaves at intermediate timescales. Appendix \ref{Appendix: Simulation of extreme timescales} provides an example of how the dynamic network model changes with the relative timescale of the weight and opinion dynamics, moving from the fixed network model \eqref{Eqn: fixed network ODEs} to the standard dynamic network model \eqref{Eqn: general dynamic network system} to the fast network limit \eqref{eqn: fast memory dynamics result}. The behaviour of the model during this transition is not yet understood. 

It is also interesting to consider other applications in which this formulation of dynamic weights could be introduced. We focus here on the subject of opinion dynamics, which provides the general form of the dynamics for $x_i$. However, other collective behaviours such as flocking could also be treated similarly. 

Another natural extension to this work is to consider the addition of noise, arising either from external sources or uncertainty in measuring individuals' opinions, as in \cite{su2017noise} and \cite{goddard2022noisy}. It was shown in \cite{wang2017noisy} that in the SDE version of the HK model, including any amount of noise eventually leads to consensus due to the random movement of whole clusters. The introduction of a fixed or dynamic network could prevent consensus, leading to a broader array of long-time behaviours. However, adding noise to \eqref{Eqn: general dynamic network system} is not straightforward as special consideration is required for edges with weight zero, that is edges that are not present in the network. Simple additive noise would lead to the formation of edges between all pairs of individuals, removing the desired network structure. Furthermore, the choice of boundary conditions has been shown to have a major impact on opinion formation \cite{goddard2022noisy}. One approach to including random effects is to add and remove edges according to a Poisson process as in \cite{barre2021fast}, which also studied the limit as this process becomes extremely fast, although how such a process could be adapted to include the type of weight dynamics considered here is not clear. 

In summary, the dynamic weights model introduced in this paper provides a new approach to the topic of evolving networks that allows weights to change continuously in time in response to individuals' interactions. This can have a significant impact on the opinion formation process, both creating and preventing consensus when combined with different interaction functions. There are multiple possible directions for further research and we hope that in the future this approach could be developed to improve the accuracy of opinion forecasting and our understanding of how our opinions and social networks evolve together. 

\bibliography{bibliography.bib}
\bibliographystyle{unsrt}

\section*{Acknowledgements}

AN was supported by the Engineering and Physical Sciences Research Council through the Mathematics of Systems II Centre for Doctoral Training at the University of Warwick (reference EP/S022244/1). MTW is partly support by the Royal Society International Exchange Grant IES/R3/213113.

\appendix

\newpage
\section{Example Behaviours} \label{Appendix: Additional examples}

This appendix provides examples of several interesting phenomena observed in the fixed and dynamic network models. For clarity we typically consider a small population size of $N=3$, but all examples could be generalised to a larger population. Moreover, the behaviours displayed here are not unique to these specific examples, which merely serve as a simple demonstration of what it is possible to observe. 

Our first two examples concern the fixed network model \eqref{Eqn: fixed network ODEs} and show that increasing the amount of interaction between individuals can in fact lead to a loss of consensus. The first example focuses on the interaction function and the second on the (fixed) network. 

\begin{example} \label{example: increasing phi}
    We define the following two interaction functions, both of which satisfy Assumption \ref{Assumption group: phi}:
    \begin{align}
        \phi_1(r) &= \frac{1}{30} \mathds{1}\big\{r < 0.81 \big\} \,,\\
        \phi_2(r) &= \frac{1}{30} \mathds{1}\big\{ r < 0.81 \big\} + \frac{29}{30} \mathds{1}\big\{ r < 0.3 \big\} \,.
    \end{align}
    We also define initial conditions and a fixed network,
    \begin{equation}
        \begin{pmatrix}
            x_1(0) \\ x_2(0) \\ x_3(0)
        \end{pmatrix} 
        = \begin{pmatrix}
            -0.2 \\ 0 \\ 0.8
        \end{pmatrix} \,,\quad
        w = \begin{pmatrix}
            1 & 1 & 0 \\
            1 & 1 & 1 \\
            0 & 1 & 1
        \end{pmatrix} \,.
    \end{equation}
    As $\phi_2(r) > \phi_1(r)$ for all $r\in[0,2]$ it may be expected that individuals would interact more and hence end up with closer opinions.  In fact, the opposite occurs. Fig.\ref{fig:increasing interaction function} shows the dynamics using each interaction function, with $\phi_1$ on the left and $\phi_2$ on the right. Plots of $\phi_1$ and $\phi_2$ are shown in insets. Individual's trajectories are coloured according to their initial opinions. Note that time is shown on a log scale.
    
    Using interaction function $\phi_1$ the three individuals reach consensus. The higher value of $\phi_2(r)$ for $r\in[0,0.3]$ creates a significantly stronger interaction between individuals 1 and 2 that causes individual 2 (in the middle) to lower their opinion, ending their interaction with individual 3 and resulting in a loss of consensus. 
    
    \begin{figure}[ht!]
        \centering
        \includegraphics[width = \linewidth, trim = {2cm 0cm 3cm 1cm}, clip]{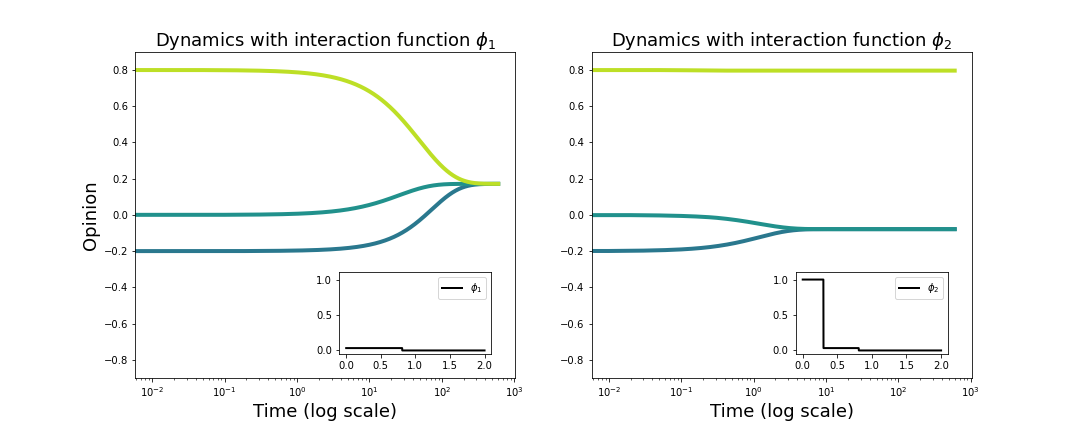}
        \caption{Example dynamics under different interaction functions (shown in inset plots). Individual's trajectories are coloured according to their initial opinion. This example shows that increasing the value of the interaction function does not always lead to greater agreement.}
        \label{fig:increasing interaction function}
    \end{figure}
\end{example}

\begin{example} \label{example: increasing w}
    In this example, we fix an interaction function $\phi(r) = \mathds{1}\big\{r < 0.8 \big\}$ and instead alter the network. Let
    \begin{equation}
        \begin{pmatrix}
            x_1(0) \\ x_2(0) \\ x_3(0)
        \end{pmatrix} 
        = \begin{pmatrix}
            -0.75 \\ 0 \\ 0.75
        \end{pmatrix} \,,\quad
        w^{(1)} = \begin{pmatrix}
            1 & 0.1 & 0 \\
            0.1 & 1 & 0.1 \\
            0 & 0.1 & 1
        \end{pmatrix} \,,\quad
        w^{(2)} = \begin{pmatrix}
            1 & 0.1 & 0 \\
            0.1 & 1 & 1 \\
            0 & 1 & 1
        \end{pmatrix} \,.
    \end{equation}
    Fig.\ref{fig:increasing weights} shows the dynamics using each network. Plots of the networks are shown in insets, with edge's colour indicating their weight. Individual's trajectories are again coloured according to their initial opinions and again time is shown on a log scale. As in the previous example we see a loss of consensus. This time the increased strength of edges $w_{23}$ and $w_{32}$ leads to a stronger interaction that brings $x_2$ away from $x_1$, ending their interaction. 
    
    \begin{figure}[ht!]
        \centering
        \includegraphics[width = \linewidth, trim = {2cm 0cm 3cm 1cm}, clip]{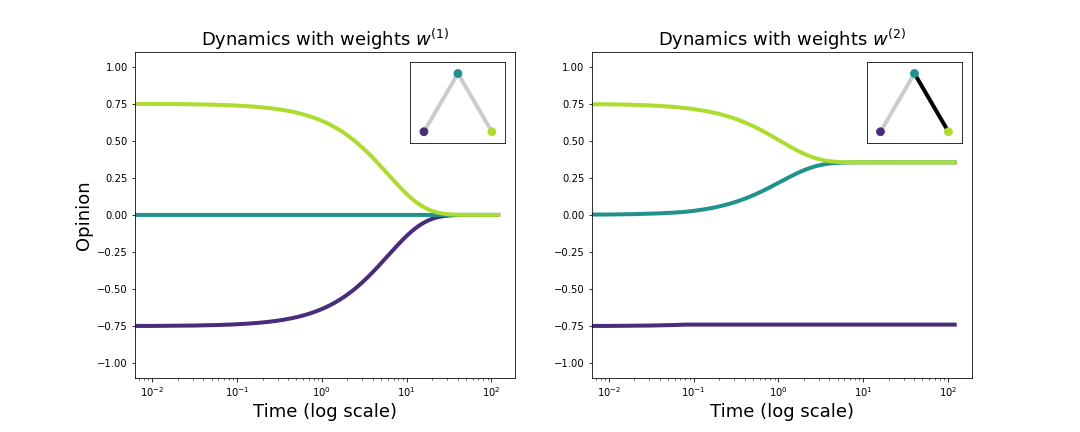}
        \caption{Example dynamics using different networks (shown in inset plots). Individual's trajectories are coloured according to their initial opinion. This example shows that increasing the strength of edges does not always lead to greater agreement.}
        \label{fig:increasing weights}
    \end{figure}
\end{example}

In both Example \ref{example: increasing phi} and Example \ref{example: increasing w}, consensus is lost due to an imbalance in interactions, leading to fast local clustering rather than a slower approach to consensus. 

Next we consider the dynamic network model \eqref{Eqn: general dynamic network system} and provide an example showing that the order of individuals' opinions is not necessarily preserved. Again a small population size of $N=3$ is chosen, but here a general interaction function $\phi$ and general weight dynamics $f^+$ and $f^-$ are used. 

\begin{example} \label{example: Order change} 

    Let $\phi$ be an interaction function satisfying Assumption \ref{Assumption group: phi}. Assume also that there exist constants $R,c > 0$ such that $\phi(r) > c$ for all $r<R$. Let $f^+$ and $f^-$ satisfy Assumption \ref{Assumption group: f}. 

    For some $K>2$, which will be determined later, set the initial conditions
    \begin{equation}
        x(0) = 
        \begin{pmatrix}
        -\frac{R}{K} \\ 0 \\ \frac{R}{2}
        \end{pmatrix}
        \,,\quad
        w(0) = 
        \begin{pmatrix}
            1 & 1 & 1 \\
            1 & 1 & 0 \\
            1 & 0 & 1 \\
        \end{pmatrix}.
    \end{equation}
    Note that $x_1(0)<x_2(0)<x_3(0)$. We show that, for $K$ sufficiently large, this initial ordering of opinions is not preserved. We assume that $x_1(t) \leq x_2(t)$ for all $t\geq 0$ and aim to reach a contradiction.
    
    By Proposition \ref{Prop: Boundedness} and the assumption that $K>2$, we have that $x_i(t)\in[-\frac{R}{K},\frac{R}{2}]\subset[-\frac{R}{2},\frac{R}{2}]$ for all $i=1,2,3$ and $t\geq 0$. Hence the distance between any pair of individuals is always strictly less than $R$ and so $\phi(|x_i(t) - x_j(t)|) > c$ for any individuals $i,j$ and $t\geq 0$. 

    $f^+$ and $f^-$ are both continuous functions, hence they are bounded. Let $M>0$ be a constant such that $|f^+(w)_{ij}|,|f^-(w)_{ij}| < M$ for any $w\in[0,1]^{N\times N}$ and $i,j\in\{1,\dots,N\}$. 

    We now derive bounds on the derivatives of $w$ and $x$, with the goal of obtaining a lower bound for the opinion $x_1(t)$ and an upper bound for the opinion $x_2(t)$. Beginning with entries in $w$:
    \begin{align*}
        \frac{dw_{12}}{dt} 
        &\geq - f^-(w)_{12}\,\big(1  - \phi(|x_1 - x_2|) \big) \,,\\
        &\geq -M\,,
    \end{align*}
    hence $w_{12}(t) \geq 1 - Mt$. Similarly $w_{21}(t),w_{13}(t),w_{31}(t)\geq 1 - Mt$.
    \begin{align*}
        \frac{dw_{23}}{dt} 
        &\leq f^+(w)_{23}\,\phi(|x_2 - x_3|) \,\leq M \,,
    \end{align*}
    hence $w_{23}(t) \leq Mt$. Similarly $w_{32}(t)\leq Mt$. Now considering entries in $x$:
    \begin{align*}
        \frac{dx_1}{dt} 
        &= \frac{1}{k_1} \Big( w_{12}\, \phi(|x_1 - x_2|)\, (x_2 - x_1) + w_{13}\, \phi(|x_1 - x_3|)\, (x_3 - x_1) \Big) \,,\\
        &\geq \frac{1}{k_1} w_{13}\, \phi(|x_1 - x_3|)\, (x_3 - x_1) \,,\\
        &\geq \frac{1}{3} (1 - Mt)\, c\, (x_3 - x_1) \,.
    \end{align*}
    To proceed further requires a bound on $x_3 - x_1$. As all opinions remain within the interval $[-\frac{R}{2},\frac{R}{2}]$, for any $i$ we have that $|\frac{dx_i}{dt}|\leq NR = 3R$, so $x_3 - x_1 \geq \frac{R}{2} + \frac{R}{K} - 6Rt$. This gives
    \begin{align*}
        \frac{dx_1}{dt} &\geq  \frac{c}{3} (1 - Mt)\, \bigg(\frac{R}{2} + \frac{R}{K} - 6Rt \bigg) \,,\\
        &\geq  \frac{c R}{3}\, \Bigg(\frac{1}{2} + \frac{1}{K} - \bigg( \frac{M}{2} + \frac{M}{K} + 6 \bigg)t \Bigg) \,,
    \end{align*}
    hence 
    \begin{align*}
        x_1(t) 
        &\geq -\frac{R}{K} + \frac{c R}{3}\, \bigg(\frac{1}{2} + \frac{1}{K} \bigg)t - \frac{c R}{6}\, \bigg( \frac{M}{2} + \frac{M}{K} + 6 \bigg)t^2\,\\
        &\geq -\frac{R}{K} + \frac{c R}{6}t - \frac{c R}{6}\, \bigg( \frac{3M}{2} + 6 \bigg)t^2 \,.
    \end{align*}
    With the final inequality requiring $K \geq 1$. Moving onto $x_2$:
    \begin{align*}
        \frac{dx_2}{dt} 
        &= \frac{1}{k_2} \Big( w_{21}\, \phi(|x_2 - x_1|)\, (x_1 - x_2) + w_{23}\, \phi(|x_2 - x_3|)\, (x_3 - x_2) \Big) \,,\\
        &\leq \frac{1}{k_2} w_{23}\, \phi(|x_2 - x_3|)\, (x_3 - x_2) \,,\\
        &\leq Mt\, (x_3 - x_2) \,.
    \end{align*}
    As before we now require a bound on $x_3 - x_2$, however as we need an upper bound this case is simpler. As all opinions remain within the interval $[-\frac{R}{2},\frac{R}{2}]$, $x_3 - x_2 \leq R$. This gives,
    \begin{equation*}
        x_2(t) \leq \frac{MR}{2}t^2 \,.
    \end{equation*}
    Define a function of time $g(t)$ to be the difference between the lower bound for $x_1$ and the upper bound for $x_2$, with a factor of $R$ removed for convenience. 
    \begin{align*}
        g(t) &=  -\frac{1}{K} + \frac{c}{6}\,t - \frac{c}{6}\, \bigg( \frac{3M}{2} + 6 \bigg)t^2 - \frac{M}{2}t^2 \,,\\
        &=  -\frac{1}{K} + \frac{c}{6}\,t -\, \frac{M(c + 2) + 4c}{4}\, t^2 \,,\\
        &=  -\frac{1}{K} + t\, \bigg( \frac{c}{6}\, -\, \frac{M(c + 2) + 4c}{4}\, t \bigg) \,.
    \end{align*}
    Letting
    \begin{equation*}
        \Bar{t} = \frac{c}{3M(c+2) + 12c} > 0,
    \end{equation*}
    we have that 
    \begin{equation}
        \Bar{t}\,\bigg( \frac{c}{6}\, -\, \frac{M(c + 2) + 4c}{4}\,\Bar{t} \bigg) = \Bar{t}\,\frac{c}{12} > 0 \,,
    \end{equation}
    so 
    \begin{equation}
        g(\Bar{t}) > \Bar{t}\,\frac{c}{12} - \frac{1}{K} \,.
    \end{equation}
    Hence choosing $K$ sufficiently large (and exceeding $1$) gives that 
    $g\big(\Bar{t}\big) > 0$, meaning $x_1\big(\Bar{t}\big) > x_2\big(\Bar{t}\big)$. This contradicts the assumption that $x_1(t) \leq x_2(t)$ for all $t\geq 0$, meaning that there must exist an $s\geq 0$ at which $x_1(s) > x_2(s)$ and therefore that the initial order of opinions is not preserved. 
    
    Note that as $\Bar{t}$ was derived under the (false) assumption that $x_1(t) \leq x_2(t)$ for all $t\geq 0$, it itself does not necessarily provide a time at which $x_1 > x_2$.
\end{example}

For our final example we return to the fixed network model \eqref{Eqn: fixed network ODEs}, motivated by the results of the case study in Section \ref{Section: BC}. In Fig.\ref{fig:timed heatmaps, bounded confidence} we observe that areas of the parameter space in which the behaviour of the model changes typically take much longer to reach their steady state. Example \ref{example: transition} demonstrates why this may be the case. 

\begin{example} \label{example: transition}
    For a population size of $N=201$ we evenly spaced initial opinions and set a regular network $w$. The fixed network model \eqref{Eqn: fixed network ODEs} is then simulated using the bounded confidence interaction function $\phi_R$ \eqref{Eqn: confidence indicator} for each of $R=0.35$, $R=0.4$ and $R=0.45$. In addition, the same condition as described in Section \ref{Section: BC} is used to determine when the steady state has been reached. This time is shown by a vertical dashed line. For $R=0.4$ and $R=0.45$ the population reaches consensus. However, it is clear that the middle value $R=0.4$ takes significantly longer to reach its steady state as it first forms several clusters that then later combine. This demonstrates why areas of parameter space in which the model changes behaviour often correspond to those which take longer to reach their steady state. 
    \begin{figure}[ht!]
        \centering
        \includegraphics[width = \linewidth]{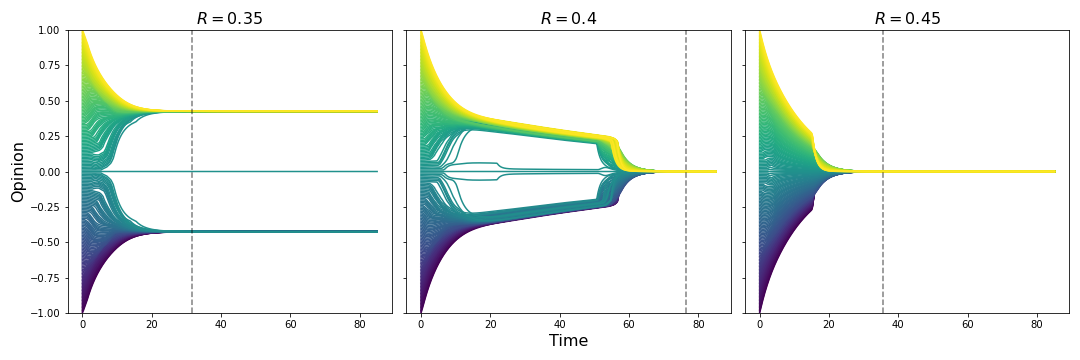}
        \caption{Example dynamics of the fixed network model \eqref{Eqn: fixed network ODEs} with bounded confidence interaction function $\phi_R$ \eqref{Eqn: confidence indicator} for three values of $R$. The time at which the steady state is reached is shown by a vertical dashed line. The middle panel, in which the behaviour transitions from clustering to consensus, takes the longest to reach its steady state.}
        \label{fig:transition example}
    \end{figure}
\end{example}

\clearpage
\section{Difference heatmaps} \label{Appendix: Difference heatmaps}

To help show the impact of the dynamic network we present below heatmaps of the difference in results between the fixed and dynamic network models. The results arise from the numerical simulations performed in Sections \ref{Section: BC} and \ref{Section: exp}. Blue indicates areas in which the dynamic network model had a higher order parameter than the fixed network model and red indicates the opposite. Grey shows areas in which the models produce the same final order parameter. Fig.\ref{fig:difference heatmap BC} shows results from the bounded confidence case study and Fig.\ref{fig:difference heatmap exp} shows results from the exponential interaction function case study.  

\begin{figure}[ht!]
    \centering
    \includegraphics[width = \linewidth, trim = {3cm 2cm 3cm 2cm}, clip]{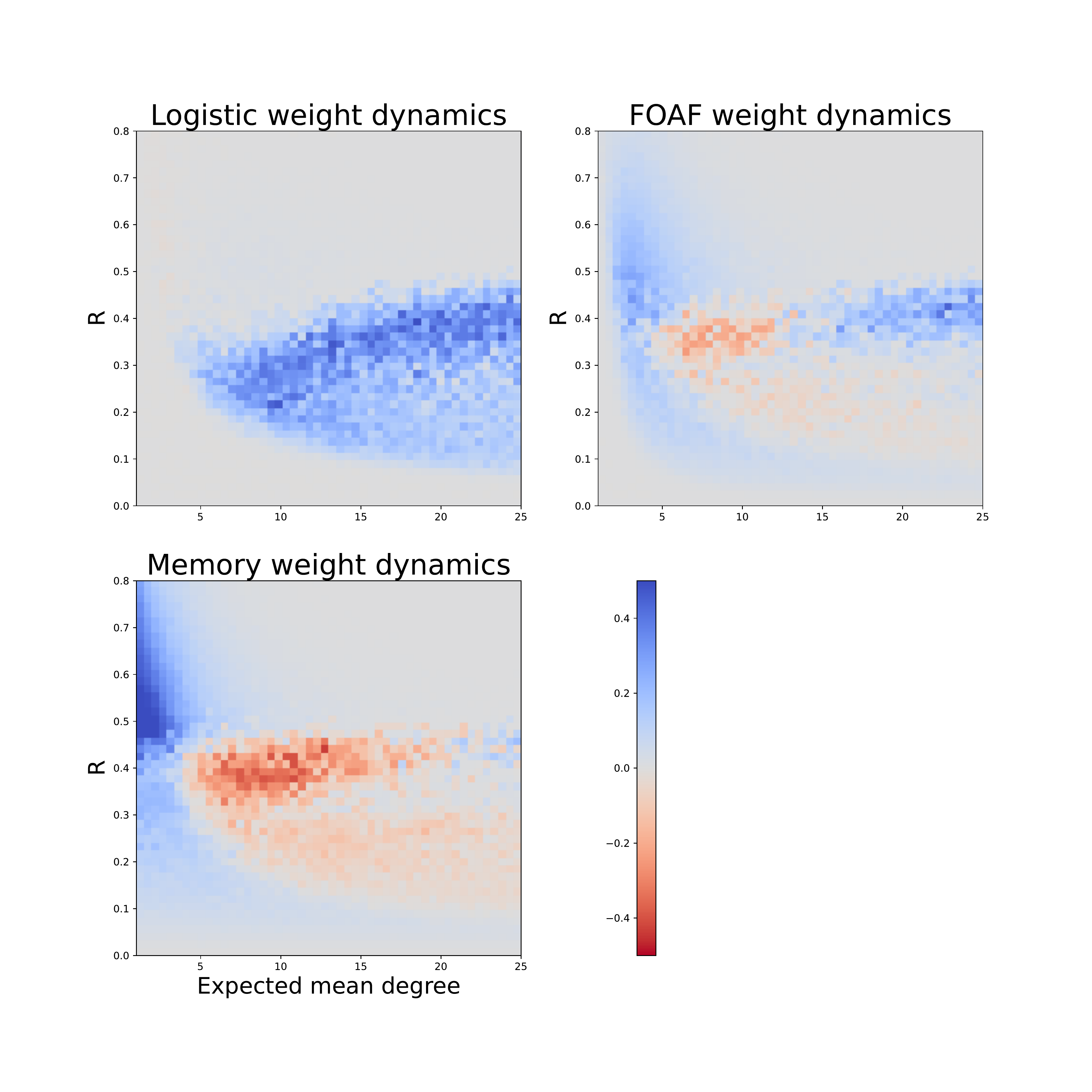}
    \caption{Heatmaps showing difference in the order parameter at steady state between the fixed network model \eqref{Eqn: fixed network ODEs} and dynamic network model \eqref{Eqn: general dynamic network system} with bounded confidence interaction function $\phi_R$ \eqref{Eqn: confidence indicator}. Each plot corresponds to a different type of weight dynamics. The expected mean degree of the initial network is varied along the x-axis of each heatmap and the bounded confidence radius $R$ is varied along the y-axis.}
    \label{fig:difference heatmap BC}
\end{figure}

\begin{figure}[ht!]
    \centering
    \includegraphics[width = \linewidth, trim = {3cm, 0cm, 3cm, 0cm}, clip]{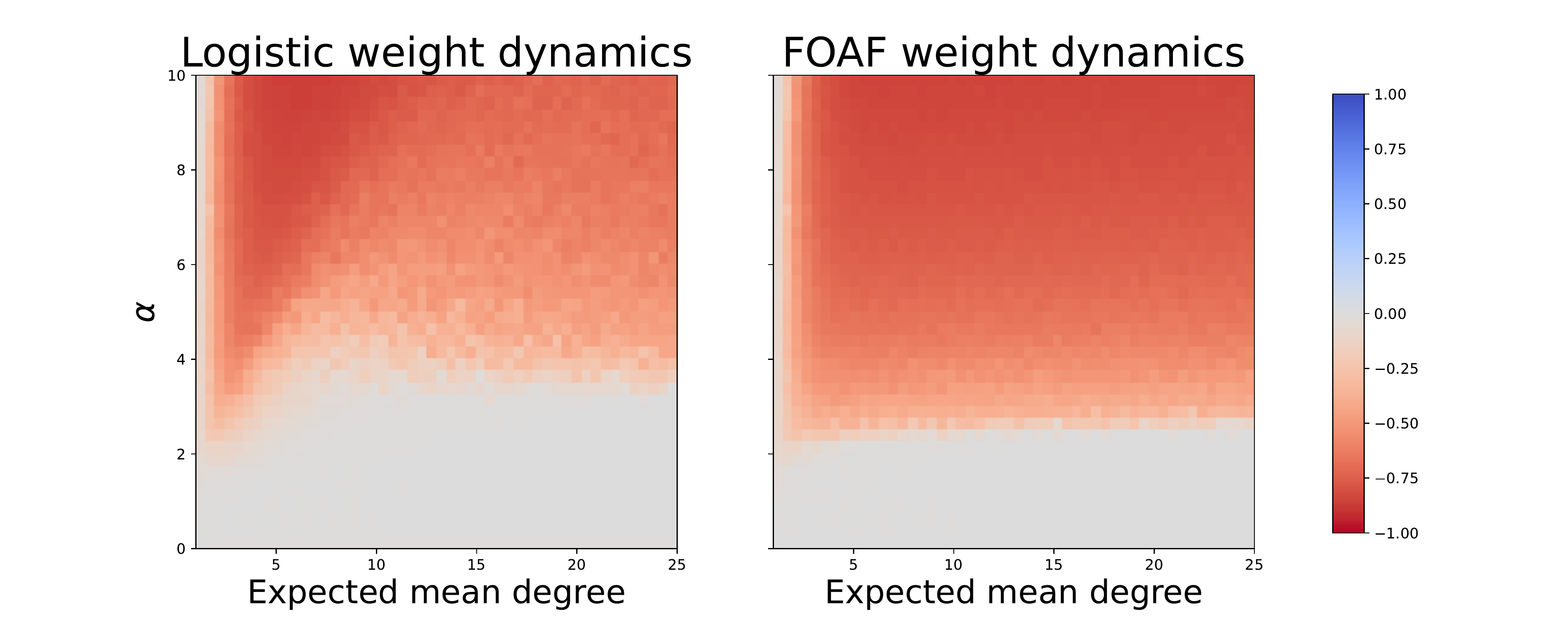}
    \caption{Heatmaps showing difference in the order parameter at steady state between the fixed network model \eqref{Eqn: fixed network ODEs} and dynamic network model \eqref{Eqn: general dynamic network system} with exponential interaction function $\phi_R$ \eqref{Eqn: confidence indicator}. Each plot corresponds to a different type of weight dynamics. The expected mean degree of the initial network is varied along the x-axis of each heatmap and the exponential decay rate $\alpha$ is varied along the y-axis.}
    \label{fig:difference heatmap exp}
\end{figure}

\clearpage
\section{Numerical simulation of extreme timescales} \label{Appendix: Simulation of extreme timescales}

To demonstrate the extreme timescales described in Theorem \ref{Theorem: slow weight dynamics} and Theorem \ref{theorem: fast memory weights} we perform a numerical simulation of the dynamic network model \eqref{Eqn: general dynamic network system} with memory weight dynamics \eqref{eqn: memory weight dynamics}, in which the weight dynamics \eqref{Eqn: general dynamic network system w_ij} is scaled by a factor of $\tau > 0$. We choose an interaction function $\phi$ which is a smoothed version of the bounded confidence interaction function $\phi_R$ with $R = 0.4$. This re-scaled system is
\begin{subequations} \label{eqn: dynamic network tau}
\begin{align}
    \frac{dx_i}{dt} &= \frac{1}{k_i} \sum_{j\neq i} w_{ij}\,\phi\big(|x_i - x_j|\big)\,(x_j - x_i) \,, \\ 
    \frac{dw_{ij}}{dt} &= \tau\,\Big( \phi\big(|x_i - x_j|\big)\, 
 - w_{ij} \Big) \,.
 \end{align}
\end{subequations}
For $\tau \ll 1$ we approach the limit described by Theorem \ref{Theorem: slow weight dynamics} and compare against the fixed network model \eqref{Eqn: fixed network ODEs}. For $\tau \gg 1$ we compare against the `fast' limiting model \eqref{eqn: fast memory dynamics limiting model} from Theorem \ref{theorem: fast memory weights}. 

For a population size of $N = 50$ we fix initial opinions and an initial network. Fig.\ref{fig:comparing timescales timeseries} then shows the opinion formation process \eqref{eqn: dynamic network tau} for various values of $\tau$ at orders of magnitude from $\tau = 10^{-4}$ to $\tau = 10^2$. Note that $\tau = 1$ corresponds to the standard dynamic network model with memory weight dynamics. We also show the dynamics of the fixed network model \eqref{Eqn: fixed network ODEs} and the fast limiting model \eqref{eqn: fast memory dynamics limiting model} for comparison. We can see a clear change in the opinion formation process between the different values of $\tau$. Those small values at the top of the figure match closely to the behaviour of the fixed network model, while those larger values of $\tau$ at the bottom of the figure match the behaviour of the fast limiting model. Note that in Fig.\ref{fig:comparing timescales timeseries} time is shown on a log scale. 

\begin{figure}[ht!]
    \centering
    \includegraphics[width = 0.7\linewidth, trim = {2.5cm 8cm 3.5cm 8cm}, clip]{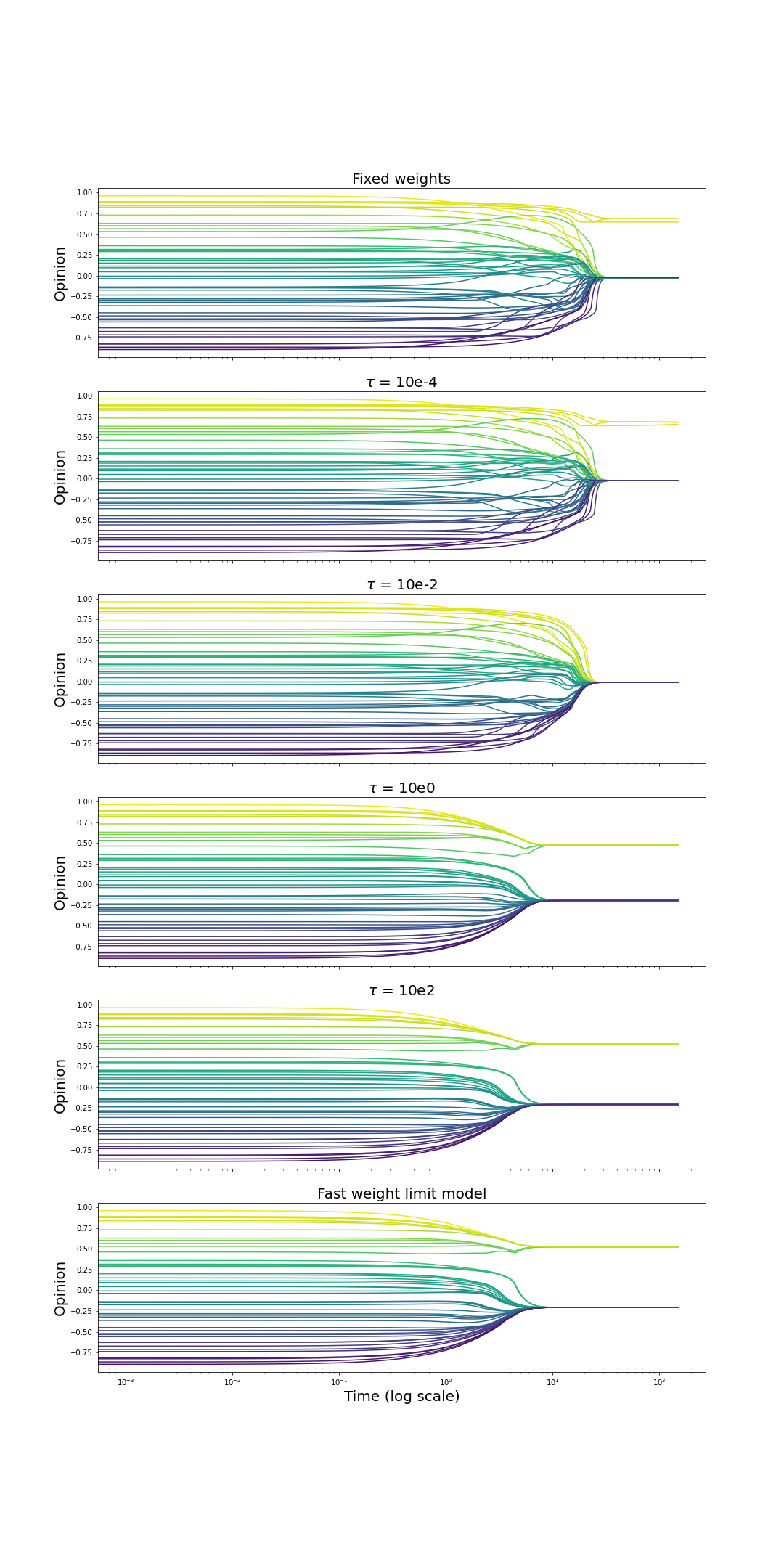}
    \caption{Demonstration of the results of Theorem \ref{Theorem: slow weight dynamics} and Theorem \ref{theorem: fast memory weights}. Dynamics of the fixed network model (top), dynamic network model (middle panels) and fast limit model (bottom) for identical initial opinions and network are shown. In the dynamic network model the speed of the memory weight dynamics has been scaled by $\tau$. For small values of $\tau$, towards the top of the figure, the dynamics resemble the fixed network model. For large values of $\tau$, towards the bottom of the figure, the dynamics resemble instead the fast limit model.}
    \label{fig:comparing timescales timeseries}
\end{figure}
The comparison between the different models is made more concrete in Fig.\ref{fig:comparing timescales errors}. For each value of $\tau$ we calculate the difference between that simulation and both the fixed network model and fast limit model. As all simulations use the same fixed timestep we calculate the average (over time) difference between the models' opinion states. It is clear that as $\tau$ becomes smaller the error against the fixed network model decreases and as $\tau$ becomes large the error against the fast limit model decreases. In Fig.\ref{fig:comparing timescales errors} both axes are on log scales. For all values of $\tau$ the model is simulated using a fixed timestep of $10^{-4}$, so the maximum value of $\tau$ considered is $10^3$. The minimum value considered is $10^{-5}$ to avoid encountering computational errors. 

\begin{figure}[ht!]
    \centering
    \includegraphics[width = \linewidth]{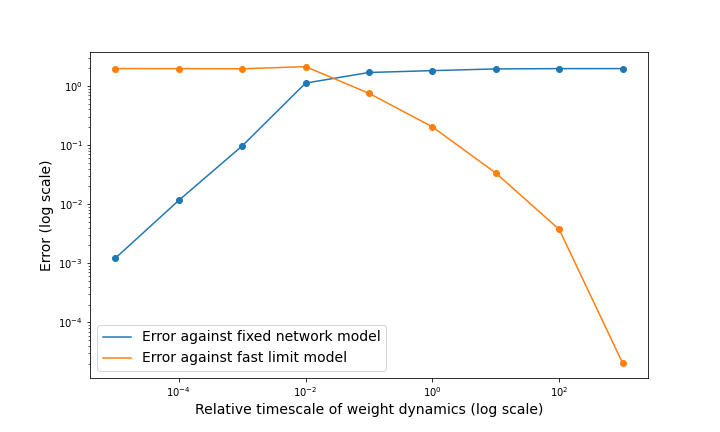}
    \caption{Comparing the re-scaled dynamic network model \eqref{eqn: dynamic network tau} against the fixed network model \eqref{Eqn: fixed network ODEs} and the fast limit model \eqref{eqn: fast memory dynamics result}. For each relative timescale the average (over time) difference between the models' opinion states is shown. When the weight dynamics are relatively slow (towards the left) the error against the fixed network model is low and when the weight dynamics are relatively fast (towards the right)  the error against the fast limit model is low.}
    \label{fig:comparing timescales errors}
\end{figure}

\clearpage
\section{Proof of Lemma \ref{Lemma: MotschTadmor}} \label{Appendix: Proofs}

\begin{proof} 

    The proof of this lemma is adapted from the argument presented in Theorem 4.3 of \cite{motsch2014heterophilious}, with the energy functional modified to include the fixed network $w$. 
    
    Define the energy functional 
    \begin{equation}
        \mathcal{E}(t) := \sum_{i,j} w_{ij} \,\varphi\big( |x_i(t) - x_j(t)| \big), \quad \varphi(r) := \int_0^r s \phi(s) \,ds
    \end{equation}
    This energy decreases over time since
    \begin{align*}
        \frac{d}{dt} \mathcal{E}(t) 
        &= -2 \sum_{i,j} w_{ij} \,\phi\big( |x_i - x_j| \big) \,\Big\langle \frac{dx_i}{dt} \,, x_i - x_j \Big\rangle \,,\quad\text{(by symmetry of $w$ and $\phi$)} \,\\
        &= -2 \sum_{i=1}^N \big\langle \frac{dx_i}{dt} \,, \sum_{j=1}^N w_{ij} \,\phi\big( |x_i - x_j| \big) \,(x_i - x_j) \big\rangle \,,\\
        &= -2 \sum_{i=1}^N k_i \Big|\frac{dx_i}{dt}\Big|^2 \leq 0 \,.
    \end{align*}
    The rate of this decrease can be quantified by considering
    \begin{align*}
        \Bigg( \frac{1}{2} \sum_{i,j} w_{ij} \,\phi\big( |x_i - x_j| \big) \,|x_i - x_j|^2 \Bigg)^2 
        &= \Bigg( \sum_{i,j} w_{ij} \,\phi\big( |x_i - x_j| \big) \,\langle x_i\,, x_j - x_i \rangle \Bigg)^2 \,,\\
        &= \Bigg( \sum_{i=1}^N \big\langle x_i \,, \sum_{j=1}^N w_{ij} \,\phi\big( |x_i - x_j| \big) \,(x_j - x_i) \big\rangle \Bigg)^2 \,,\\
        &= \Bigg( \sum_{i=1}^N k_i \,\Big\langle x_i \,, \frac{dx_i}{dt} \Big\rangle \Bigg)^2 \,,\\
        &\leq \Bigg( \sum_{i=1}^N k_i \,|x_i|^2 \Bigg)  \Bigg( \sum_{i=1}^N k_i \,\Big|\frac{dx_i}{dt}\Big|^2 \Bigg) \,,\\
        &\leq N^2 \Bigg( \sum_{i=1}^N k_i \,\Big|\frac{dx_i}{dt}\Big|^2 \Bigg) \,.
    \end{align*}
    Hence,
    \begin{equation}
        \frac{d}{dt} \mathcal{E}(t) \leq -\frac{1}{2 N^2} \Bigg( \sum_{i,j} w_{ij} \,\phi\big( |x_i - x_j| \big) \,|x_i - x_j|^2 \Bigg)^2 \,.
    \end{equation}
    As the energy $\mathcal{E}(t)$ is non-negative this implies that,
    \begin{equation}
        \int_0^\infty \Bigg( \sum_{i,j} w_{ij} \,\phi\big( |x_i(s) - x_j(s)| \big) \,\big| x_i(s) - x_j(s) \big|^2 \Bigg)^2 \,ds \, \leq 2 N^2 \mathcal{E}(0) < \infty \,.
    \end{equation}
    Hence the integrand must become arbitrarily small at some point in time, specifically for any $\epsilon > 0$ there exists a time $t^*$ such that
    \begin{equation}
        \sum_{i,j} w_{ij} \,\phi\big( |x_i(t^*) - x_j(t^*)| \big) \,\big| x_i(t^*) - x_j(t^*) \big|^2 < \epsilon \,. 
    \end{equation}
\end{proof}

\end{document}